\tikzstyle dynkin node=[very thick,shape=circle,draw,inner sep=0pt,minimum size=5mm]
\tikzstyle dynkin line=[very thick]
\tikzstyle inverse line=[gray,line width=1.46pt,line cap=round, dash pattern=on 0pt off 2\pgflinewidth]
\tikzstyle red phase=[red,decoration={snake,amplitude=0.1mm,segment length=1.6mm},decorate]
\tikzstyle blue phase=[blue,decoration={snake,amplitude=0.1mm,segment length=0.9mm},decorate]
\tikzstyle green phase=[green,decoration={snake,amplitude=0.1mm,segment length=0.9mm},decorate]
\tikzstyle brown phase=[brown,decoration={snake,amplitude=0.1mm,segment length=0.9mm},decorate]
\tikzstyle arrow=[thick,rounded corners=18pt,-latex]
\tikzstyle box=[draw,rounded corners,outer sep=4pt]
\tikzstyle B node=[outer sep=0pt]
\tikzstyle Q node=[inner sep=1pt,outer sep=0pt]
\definecolor{MyDarkBlue}{rgb}{0.15,0.25,0.45}
\newcommand{\Ya}[1]{\mathcal{Y}(#1)}
\let\fn\footnote
\renewcommand{\footnote}[1]{\linespread{1.1}\fn{#1}\linespread{1.29}}
\makeatletter\renewcommand{\section}{\@startsection
{section}{1}{\z@}{-3.5ex plus -1ex minus
    -.2ex}{2.3ex plus .2ex}{\bf\mathversion{bold} }}
\makeatletter\renewcommand{\subsection}{\@startsection{subsection}{2}{\z@}{-3.25ex
plus -1ex minus
   -.2ex}{1.5ex plus .2ex}{\bf\mathversion{bold} }}
\makeatletter\renewcommand{\subsubsection}{\@startsection{subsubsection}{3}{-2.45ex}{-3.25ex
plus -1ex minus -.2ex}{1.5ex plus .2ex}{\it }}
\renewcommand{\thesection}{\arabic{section}}
\renewcommand{\thesubsection}{\arabic{section}.\arabic{subsection}}
\renewcommand{\@seccntformat}[1]{\@nameuse{the#1}.~~}
\renewcommand{\theequation}{\thesection.\arabic{equation}}
\makeatletter \@addtoreset{equation}{section}
\renewcommand*\l@section{\@dottedtocline{1}{0em}{2em}}
\renewcommand*\l@subsection{\@dottedtocline{2}{2em}{2.4em}}
\renewcommand*\l@subsubsection{\@dottedtocline{4}{3.8em}{3.7em}}
\renewcommand\tableofcontents{%
    \section*{\large\contentsname
        \@mkboth{%
          \MakeUppercase\contentsname}{\MakeUppercase\contentsname}}%
       {\baselineskip=15pt plus 2pt minus 1pt
    \@starttoc{toc}}%
}
\renewenvironment{thebibliography}[1]
     {\baselineskip=16pt plus 2pt minus 1pt
      \section*{\large\refname
        \@mkboth{\MakeUppercase\refname}{\MakeUppercase\refname}}%
     \list{\@biblabel{\@arabic\c@enumiv}}%
           {\settowidth\labelwidth{\@biblabel{#1}}%
            \leftmargin\labelwidth
            \advance\leftmargin\labelsep
            \@openbib@code
            \usecounter{enumiv}%
            \let\p@enumiv\@empty
            \renewcommand\theenumiv{\@arabic\c@enumiv}}%
      \sloppy
      \clubpenalty4000
      \@clubpenalty \clubpenalty
      \widowpenalty4000%
      \sfcode`\.\@m
 \catcode`\^^M=10%
}
\newcommand{\appendices}{
\section*{Appendix}\label{appendices}\setcounter{subsection}{0}
\addcontentsline{toc}{section}{Appendix}
\setcounter{equation}{0}
\makeatletter
\renewcommand{\theequation}{\Alph{subsection}.\arabic{equation}}
\renewcommand{\thesubsection}{\Alph{subsection}}
\@addtoreset{equation}{subsection}
\makeatother
}
\renewcommand{\geq}{\geqslant}
\DeclareMathOperator{\tr}{tr}
\def\XXint#1#2#3{{\setbox0=\hbox{$#1{#2#3}{\int}$}
    \vcenter{\hbox{$#2#3$}}\kern-.5\wd0}}
\newcommand{\alg}[1]{\mathfrak{#1}}
\begin{document}

\begin{titlepage}

\setcounter{page}{0}
\renewcommand{\thefootnote}{\fnsymbol{footnote}}

\begin{flushright}
DMUS--MP--16/04 
\end{flushright}

\vspace{1cm}

\begin{center}

\textbf{\Large\mathversion{bold} Lectures on Classical Integrability}

\vspace{1cm}

{\large Alessandro Torrielli \footnote{{\it E-mail address:\/}
\href{mailto:a.torrielli@surrey.ac.uk}{\ttfamily a.torrielli@surrey.ac.uk}}
} 

\vspace{1cm}

\it Department of Mathematics, University of Surrey\\
Guildford GU2 7XH, United Kingdom

\vspace{1cm}

{\bf Abstract}
\end{center}
\vspace{-.3cm}
\begin{quote}
We review some essential aspects of classically integrable systems. The detailed outline of the lectures consists of
\begin{enumerate}
\item Introduction and motivation, with historical remarks;
\item Liouville theorem and action-angle variables, with examples (harmonic oscillator, Kepler problem);
\item Algebraic tools: Lax pairs, monodromy and transfer matrices, classical $r$-matrices and exchange relations, non-ultralocal Poisson brackets, with examples (non-linear Schr\"odinger model, principal chiral field);
\item Features of classical $r$-matrices: Belavin-Drinfeld theorems, analyticity properties, and lift of the classical structures to quantum groups;
\item Classical inverse scattering method to solve integrable differential equations: soliton solutions, spectral properties and the Gel'fand-Levitan-Marchenko equation, with examples (KdV equation, Sine-Gordon model).
\end{enumerate}    

Prepared for the Durham {\it Young Researchers Integrability School}, organised by the GATIS network. This is part of a collection of lecture notes.

\bigskip

\vfill
\noindent Durham, UK - July 2015

\end{quote}

\setcounter{footnote}{0}\renewcommand{\thefootnote}{\arabic{thefootnote}}

\end{titlepage}

\tableofcontents

\bigskip
\bigskip
\hrule
\bigskip
\bigskip

\section{Introduction and Motivation}

In this section, we give a very short introduction and motivation to the subject. It would be a titanic effort to provide even just an adequate list of references. Therefore we will simply mention a few, relying for the others on the many reviews and books available by now on this topic.

\smallskip 

Let us point out that the main source for these lecture notes is the classic textbook by Babelon, Bernard and Talon \cite{BabelonBernardTalon}. The reader is also encouraged to consult \cite{FaddeTak,Novikov:1984id,Gleb,ETH}.

\subsection{Historical Remarks}

Soon after the formulation of Newton's equations, people have tried to find exact solutions for interesting non-trivial cases. The Kepler problem was exactly solved by Newton himself. Nevertheless, apart from that, only a handful of other problems could be treated exactly.

In the 1800s Liouville refined the notion of {\it integrability}
for Hamiltonian systems, providing a general framework for solving particular dynamical systems {\it by quadratures}. However, it was not until the 1900s that a more or less systematic method was developed. This goes under the name of the {\it classical inverse scattering method}. It was invented by Gardner, Green, Kruskal and Miura in 1967, when they successfully applied it to solve the Korteweg-deVries (KdV) equation of fluid mechanics \cite{GardnerGreeneKruskalMiura}, and it was further developed in \cite{ZS1,ZS2}.

The quantum mechanical version of the inverse scattering method was then elaborated during the following decade by the Leningrad -- St.~Petersburg school (see for instance \cite{Evgeny}), with Ludwig Faddeev as a head, and among many others Korepin, Kulish, Reshetikhin, Sklyanin, Semenov Tian-Shansky and Takhtajan. They established a systematic approach to integrable\footnote{The term {\it integrable} referring to a field-theoretical (infinite-dimensional) model originates from \cite{FaddeevZakharov}.} quantum mechanical systems which makes connection to Drinfeld and Jimbo's theory of quantum groups \cite{Drinfeld}, paving the way to the algebraic reformulation of the problem.
This approach has the power of unifying in a single mathematical framework integrable quantum field theories (cf. \cite{ZamolodchikovZamolodchikov}) together with lattice spin systems (cf. \cite{Baxter}). 

It is probably appropriate to mention that {\it integrability} is still not the same as {\it solvability}. The fact that the two things very often go together is certainly what makes integrable theories so appealing, nevertheless one requires a distinction. There exist integrable systems which one cannot really solve to the very end, as well as exactly solvable systems which are not integrable\footnote{The first statement is particularly relevant for quantum integrable systems on compact domains, such as those described in Stijn van Tongeren's lectures at this school \cite{Stijn}. For such systems, the spectrum is encoded in a set of typically quite complicated integral equations - {\it Thermodynamic Bethe Ansatz} - which can often only be solved via numerical methods. The second statement can instead be exemplified by those {\it dissipative} mechanical systems - such as a free-falling particle subject to air resistance - which are not integrable (and, in fact, not {\it conservative} either) but admit an exact solution. A more elaborated example is however the one of {\it exactly solvable chaotic systems} - see for instance \cite{Ulam}.}. {\it Solvability} ultimately depends on one's ability and computational power. {\it Integrability} rather refers to the property of a system to exhibit regular (quasi-periodic) {\it vs.} chaotic behaviour, and to its conservation laws. This enormously facilitates, and, most importantly, provides {\it general mathematical methods} for the exact solution of the problem. The same holds for the quantum mechanical version, where integrability implies very specific properties of the scattering theory and of the spectrum, as it will be amply illustrated during this school. 

Nowadays, integrability appears in many different contexts within mathematics and mathematical physics. The list below includes only a very small subset of all the relevant research areas.

\begin{enumerate}

\item Classical integrability\footnote{We would like to point out a conjecture put forward by Ward in 1985 \cite{Ward:1985gz} (see also \cite{Ablowitz:2003bv}): {\it ``... many (and perhaps all?) of the ordinary and partial differential equations that are
regarded as being integrable or solvable may be obtained from the self-dual gauge [Yang-Mills, ndr.] field
equations (or its generalisations) by [$4d$ dimensional, ndr.] reduction."} The vast freedom in the Lax pair formulation of the self-dual Yang-Mills equations is due to the arbitrariness of the gauge group. We thank M. Wolf for the information.}

Theory of PDEs, Differential Geometry, General Relativity, Fluid Mechanics.

\item Quantum integrability:

Algebra, Knot Theory, Condensed Matter Physics, String Theory.

\end{enumerate}

It would be impossible to do justice to all the applications of integrability in mathematics and physics; that is why we will just end this introduction with an {\it Ipse dixit}.

\subsection{Why integrability?}

L. Faddeev once wrote \cite{Faddeev} 

{\small
\begin{enumerate}

\item[] 

{\it ``One can ask, what is good in $1 + 1$ models, when our space–time is $3 + 1$ dimensional. There are several particular answers to this question.

\begin{enumerate}
\item The toy models in $1 + 1$ dimension can teach us about the realistic field-theoretical models in a nonperturbative way. Indeed such phenomena as renormalisation, asymptotic freedom, dimensional transmutation (i.e. the appearance of mass via the regularisation parameters) hold in integrable models and can be described exactly.
\item There are numerous physical applications of the $1 + 1$ dimensional models in the condensed matter physics.
\item The formalism of integrable models showed several times to be useful in the modern string theory, in which the world-sheet is $2$ dimensional anyhow. In particular the conformal field theory models are special massless limits of integrable models.
\item The theory of integrable models teaches us about new phenomena, which were not appreciated in the previous developments of Quantum Field Theory, especially in connection with the mass spectrum.
\item I cannot help mentioning that working with the integrable models is a delightful pastime. They proved also to be very successful tool for the educational purposes.
\end{enumerate}
These reasons were sufficient for me to continue to work in this domain for the last 25 years (including 10 years of classical solitonic systems) and teach quite a few followers, often referred to as Leningrad - St. Petersburg school."
}
\end{enumerate}

\section{Integrability in Classical Hamiltonian Systems}

In this section we review the notion of integrability for a Hamiltonian dynamical system, and how this can be used to solve the equations of motion. 

\subsection{Liouville Theorem}

Let us take a Hamiltonian dynamical system with a $2d$-dimensional phase space $M$ parameterised by the canonical variables
\begin{eqnarray}
\label{var}
(q_\mu, p_\mu), \qquad\mu=1,...,d.
\end{eqnarray} 
Let the Hamiltonian function be $H(q_\mu,p_\mu)$, where $q_\mu,p_\mu$ denotes the collection of variables (\ref{var}). We also require the Poisson brackets to be 
\begin{eqnarray}
\{ q_\mu, q_\nu\} = \{p_\mu, p_\nu\} = 0, \qquad \{ q_\mu, p_\nu\} = \delta_{\mu \nu}, \qquad \forall \, \, \mu,\nu = 1,...,d. 
\end{eqnarray}
One calls the system {\it Liouville integrable} if one can find $d$ independent conserved quantities $F_\mu$, $\mu=1,...,d$, in involution, namely
\begin{eqnarray}
\label{invo}
\{F_\mu, F_\nu\} = 0, \qquad \forall \, \, \, \mu,\nu = 1,...,d.
\end{eqnarray} 
Independence here refers to the linear independence of the set of one-forms $dF_\mu$. Note that, since $d$ is the maximal number of such quantities, and since conservation of all the $F_\mu$ means $\{H,F_\mu\}=0 \, \, \forall \, \mu=1,...,d$, then one concludes that 
\begin{eqnarray}
\label{funof}
H = H(F_\mu),
\end{eqnarray} 
{\it i.e.} the Hamiltonian itself is a function of the quantities $F_\mu$.

\newtheorem*{Liouville Theorem}{Theorem (Liouville)}

\begin{Liouville Theorem}
The equations of motion of a Liouville-integrable system can be solved {\it ``by quadratures"} \footnote{We may consider this as a synonym of {\it ``by straightforward integration"}.}.
\end{Liouville Theorem}

\begin{proof}
Let us take the canonical one-form
\begin{eqnarray}
\alpha \equiv \sum_{\mu=1}^d p_\mu \, dq_\mu
\end{eqnarray}
and consider the $d$-dimensional level submanifold 
\begin{eqnarray}
M_f \equiv \{(q_\mu,p_\mu) \in M | F_\mu =f_\mu\}
\end{eqnarray}
for some constants $f_\mu$, $\mu=1,...,d$.
Construct the function
\begin{eqnarray}
\label{S}
S \equiv \int_C \alpha \, = \, \int_{q_0}^q \sum_{\mu=1}^d p_\mu \, dq_\mu,
\end{eqnarray}
where the open (smooth, non self-intersecting) path $C$ lies entirely in $M_f$. In (\ref{S}), one thinks of the momenta $p_\mu$ as functions of $F_\mu$, which are constant on $M_f$, and of the coordinates (see comments in section \ref{aa}). 

The function $S$ is well-defined as a function of the upper extremum of integration $q$ (with $q_0$ thought as a convenient reference point), because the integral in its definition (\ref{S}) does not depend on the path. This is a consequence (via Stokes' theorem) of the fact that $d \alpha = 0$ on $M_f$. We will now prove that $d \alpha = 0$ on $M_f$. 

\begin{proof}

One has that
\begin{eqnarray}
\omega \equiv d \alpha = \sum_{\mu=1}^d dp_\mu \wedge dq_\mu 
\end{eqnarray}  
is the symplectic form on $M$. Let us denote by $X_\mu$ the Hamiltonian vector field associated to $F_\mu$, acting like
\begin{eqnarray}
X_\mu (g) \equiv \{F_\mu, g\} = \sum_{\nu=1}^d \Bigg(\frac{\partial F_\mu}{\partial q_\nu}\frac{\partial g}{\partial p_\nu} - \frac{\partial F_\mu}{\partial p_\nu}\frac{\partial g}{\partial q_\nu}\Bigg)
\end{eqnarray}
on any function $g$ on $M$. Equivalently, 
\begin{eqnarray}
dF_\mu(.) = \omega (X_\mu,\cdot)
\end{eqnarray}
on vectors of the tangent space to $M$. Then, one has 
\begin{eqnarray}
\label{this}
X_\mu (F_\nu) = \{F_\mu, F_\nu\} = 0 \qquad \forall \, \, \mu,\nu=1,...,d
\end{eqnarray}
because of (\ref{invo}). 

Eq. (\ref{this}) then implies that the $X_\mu$ are tangent to the level manifold $M_f$ and form a basis for the tangent space to $M_f$, since the $F_\mu$ are all independent. 
One therefore obtains that
\begin{eqnarray}
\omega (X_\mu, X_\nu) = dF_\mu (X_\nu) = X_\nu (F_\mu) = 0,
\end{eqnarray} 
hence $\omega = 0$ on $M_f$. 
\end{proof}

At this point, we simply regard $S$ as a function of $F_\mu$ and of the upper integration point $q_\mu$, and conclude that 
\begin{eqnarray}
d S \, =  \, \sum_{\nu=1}^d \, p_\nu \, d q_\nu \, + \, \sum_{\nu=1}^d \frac{\partial S}{\partial F_\nu} \, d F_\nu \equiv   \, \sum_{\nu=1}^d \, p_\nu \, d q_\nu \, + \, \sum_{\nu=1}^d \psi_\nu \, d F_\nu,
\end{eqnarray}
where we have defined
\begin{eqnarray}
\psi_\mu \equiv \frac{\partial S}{\partial F_\mu}.
\end{eqnarray}
From $d^2 S = 0$ we deduce
\begin{eqnarray}
\omega = \sum_{\mu=1}^d d F_\mu \, \wedge \, d \psi_\mu
\end{eqnarray}
which shows that the transformation $(q_\mu,p_\mu) \rightarrow (\psi_\mu,F_\mu)$ is canonical. Moreover, all the new momenta $F_\mu$ are constants of motion, hence the time-evolution of the new coordinates is simply
\begin{eqnarray}
\label{conse}
\frac{d \psi_\mu}{dt} \, = \, \frac{\partial H}{\partial F_\mu} \, = \, \mbox{constant in time}
\end{eqnarray}
due to (\ref{funof}) and conservation of the $F_\mu$. The evolution of the new coordinates is therefore linear in time, as can be obtained by straightforward integration of (\ref{conse}) (namely, performing one {\it quadrature}).
\end{proof}

\subsection{Action-angle variables}
\label{aa}

The manifold $M_f$ defined by the equations $F_\mu(q_\nu, p_\nu ) = f_\mu$ typically displays non-trivial cycles, corresponding to a non-trivial topology, therefore the new coordinates $\psi_\mu$ are in principle multi-valued. For instance, the $d$-dimensional anisotropic harmonic oscillator admits $d$ conserved quantites in involution:
\begin{eqnarray}
\label{harmo}
F_\mu = \frac{1}{2} (p_\mu^2 + \omega_\mu^2 \, q_\mu^2)
\end{eqnarray}
(where we have set the mass equal to 1 for convenience). One can see from (\ref{harmo}) that the expression for $p_\mu = p_\mu (q_\nu, F_\rho)$, which is needed to construct $S$ in (\ref{S}), allows for two independent choices of sign. The level manifold $M_f$ is diffeomorphic to a $d$-dimensional torus\footnote{Notice that this is not a peculiar situation, rather it is quite generic. Under the assumption of compactness and connectedness, it is actually guaranteed by the {\it Arnold-Liouville} theorem.}.
 
We shall put ourselves in the situation where the manifold $M_f$ has exactly $d$ non-trivial cycles $C_\mu$, $\mu=1,...,d$. The {\it action} variables are then defined by
\begin{eqnarray}
\label{I}
I_\mu \, = \frac{1}{2 \pi} \, \oint_{C_\mu} \, \alpha,
\end{eqnarray} 
depending only on the $F_\nu$. We can therefore regard $S$ in (\ref{S}) as 
\begin{eqnarray}
S = S(I_\mu, q_\nu).
\end{eqnarray}
If we now define 
\begin{eqnarray}
\theta_\mu \, \equiv \, \frac{\partial S}{\partial I_\mu}
\end{eqnarray}
we have
\begin{eqnarray}
\oint_{C_\mu} d \theta_\nu \, = \, \frac{\partial}{\partial I_\nu} \, \oint_{C_\mu} d S \, = \, \frac{\partial}{\partial I_\nu} \, \oint_{C_\mu} \sum_{\rho=1}^d \frac{\partial S}{\partial q_\rho} \, d q_\rho= \, \frac{\partial}{\partial I_\nu} \, \oint_{C_\mu} \alpha, 
\end{eqnarray}
where we have used that $dI_\mu=0$ along the contour since the $I_\mu$ are constant on $M_f$, and that $\frac{\partial S}{\partial q_\mu} = p_\mu$. From the very definition (\ref{I}) we then get
\begin{eqnarray}
\oint_{C_\mu} d \theta_\nu \, = 2 \pi \, \delta_{\mu \nu},
\end{eqnarray}
displaying how every variable $\theta_\mu$ changes of an amount of $2 \pi$ along their corresponding cycle $C_\mu$. This shows that the $\theta_\mu$ are {\it angle} variables parameterising the $d$-dimensional torus.

\bigskip

{\it \underline{Examples}}

\medskip

\begin{itemize}

\item Let us consider a one-dimensional harmonic oscillator with $\omega=1$, such that the Hamiltonian reads
\begin{eqnarray}
H = \frac{1}{2}(p^2 + q^2).
\end{eqnarray}
There is one conserved quantity $H=F>0$, with level manifold 
\begin{eqnarray}
M_f = \{(q,p) \, | \, q^2 + p^2 = 2 F \equiv R^2 = \mbox{constant}\} = \{q = R \cos \alpha, \, p = R \sin \alpha, R>0, \alpha \in [0,2 \pi)\}. \nonumber
\end{eqnarray}
We can immediately see that $M_f$ is a circle of radius $R = \sqrt{2 F}$ in phase space. 
If we calculate $S$ from (\ref{S}) we obtain, for $q_0 = R$,
\begin{eqnarray}
S = - R^2 \int_0^\alpha d\alpha' \, \sin^2 \alpha' \, = \, F (- \alpha + \sin\alpha \, \cos \alpha).
\end{eqnarray}
At this point, we need to re-express $S$ as a function of $F$ and $q$, therefore we need to distinguish between the two branches $p>0$ and $p<0$, which introduces a multi-valuedness into $S$:
\begin{eqnarray}
\alpha = \arccos \frac{q}{R}, \qquad \mbox{if} \, \, \, \alpha \in [0,\pi], \qquad \alpha = 2 \pi - \arccos \frac{q}{R}, \qquad \mbox{if} \, \, \, \alpha \in (\pi, 2\pi).
\end{eqnarray}
If we define a function
\begin{eqnarray}
g(q,F) \equiv \frac{q}{2} \sqrt{2 F - q^2} - F \arccos \frac{q}{\sqrt{2 F}}, \qquad q \in \Big[-\sqrt{2 F},\sqrt{2 F}\Big],
\end{eqnarray}
then
\begin{eqnarray}
S = g(q,F), \qquad \mbox{if} \, \, \, p\geq 0, \qquad S = - 2 \pi F - g(q,F), \qquad \mbox{if} \, \, \, p<0.
\end{eqnarray}
Let us now calculate the new coordinate $\psi = \frac{\partial S}{\partial F}$. We obtain
\begin{eqnarray}
\psi = - \arccos \frac{q}{\sqrt{2 F}}, \qquad \mbox{if} \, \, \, p\geq 0, \qquad \psi = - 2 \pi + \arccos \frac{q}{\sqrt{2 F}}, \qquad \mbox{if} \, \, \, p<0,
\end{eqnarray} 
from which we conclude
\begin{eqnarray}
\psi = - \alpha.
\end{eqnarray}
We can immediately verify that $\psi$ depends linearly on time, because we know that the solution to the equations of motion is
\begin{eqnarray}
q = \sqrt{2 F} \cos (t + \phi)
\end{eqnarray}
for an initial phase $\phi$. We also find from (\ref{I})
\begin{eqnarray}
I = \frac{1}{2 \pi} S_{\alpha = 2 \pi} \, = \, - F,
\end{eqnarray}
hence the angle variable is given by
\begin{eqnarray}
\theta = \frac{\partial S}{\partial I} = - \frac{\partial S}{\partial F} = - \psi = \alpha.
\end{eqnarray}
The phase space is foliated by circles of constant energy, and the action-angle variables are the polar coordinates (see Figure \ref{fig:foliation}).

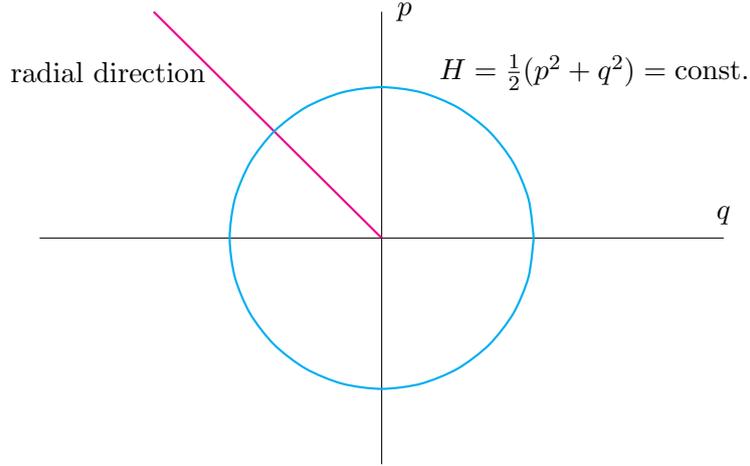
\begin{figure}[t]
  \centering
  \begin{tikzpicture}
\draw[-] (-4.5cm,0) to (4.5cm,0);
\draw[-] (0,-3cm) to (0,3cm);
\draw[scale=1,domain=0:3,smooth,variable=\x, color=magenta,thick,-] plot ({-\x},{\x});
\draw[scale=1,domain=0:6.283,smooth,variable=\x, color=cyan,thick] plot ({2*cos(\x r)},{2*sin(\x r)});
\node[] at (-3.6,2.2) {radial direction};
\node[] at (4.5,0.3) {$q$};
\node[] at (0.3,3) {$p$};
\node[] at (2.8,2.2) {$H=\frac{1}{2}(p^2+q^2)=\mathrm{const.}$};
\end{tikzpicture}
  \caption{Foliation of phase space in circles of constant energy.}
  \label{fig:foliation}
\end{figure}

\smallskip

\item  {\it The Kepler Problem}

The Kepler problem can generally be formulated as the motion of a three-dimensional particle of mass $m$ in a central potential $V(r)= \frac{\beta}{r}$. 

The dimension of the phase space is 6, therefore finding 3 conserved charges in involution will prove the integrability of the problem. The angular momentum is conserved due to rotational symmetry, hence $J^2$, $J_z$ and $H$ form a Poisson-commuting set. They are also independent, hence the system is integrable.

To exhibit the appropriate change of variables, it is convenient to use polar coordinates $(r,\theta, \phi)$, where $\theta$ is the polar angle $\theta \in [0,\pi]$. The conserved quantities are:
\begin{eqnarray}
H = \frac{1}{2} \Bigg( p_r^2 \, + \frac{p_\theta^2}{r^2} \, + \, \frac{p_\phi^2}{r^2 \sin^2 \theta} \Bigg) \, + \, V(r), \qquad J^2 = p_\theta^2 + \frac{p_\phi^2}{\sin^2 \theta}, \qquad J_z = p_\phi.
\end{eqnarray}
These relations can easily be inverted, and then plugged into the expression for the generating function $S$ (\ref{S}), with sign multi-valuedness. Noticing that, in this particular example, each polar momentum only depends on its conjugated polar coordinate, one obtains 
\begin{eqnarray}
S = \int^r dr \sqrt{2\bigg( H - V(r)\bigg) - \frac{J^2}{r^2}} + \int^\theta d\theta \sqrt{J^2 - \frac{J_z^2}{\sin^2 \theta}} + \int^\phi d\phi \, J_z.
\end{eqnarray}
The new coordinate variables are
\begin{eqnarray}
\label{firsttwo}
\psi_{J_z} = \frac{\partial S}{\partial J_z}, \qquad \psi_{J^2} = \frac{\partial S}{\partial J^2}, \qquad \psi_{H} = \frac{\partial S}{\partial H}.
\end{eqnarray}
As a consequence of (\ref{conse}), the first two coordinates in (\ref{firsttwo}) are simply constant, while the third one obeys
\begin{eqnarray}
t - t_0 = \psi_H = \int^r dr \frac{1}{\sqrt{2\bigg( H - V(r)\bigg) - \frac{J^2}{r^2}}},
\end{eqnarray}
which is the standard formula for Keplerian motion. In order to complete the analysis, we should now define the action-angle variables and determine the frequencies of angular motion along the torus. Let us remark that this necessitates the explicit use of the torus cycles, as dictated by (\ref{I}).  

Notice that more independent quantities are conserved besides $J^2$, $J_z$ and $H$, namely also $J_x$, $J_y$ and the {\it Laplace-Runge-Lenz} vector:
\begin{eqnarray}
\label{RL}
\vec{A} = \vec{p} \times \vec{J} + m \, \beta \, \hat{r}, 
\end{eqnarray}
where $\hat{r}$ is the unit vector in the radial direction.
When the total number of independent conserved quantities (the $d$ ones which are in involution plus the remaining ones) equals $d+m$, with $0<m<d-1$, we will call the system {\it super-integrable}. When $m=d-1$, we call the system {\it maximally super-integrable}. In the case of the Kepler problem, we have found 8 conserved quantities, namely $E, J^2, \vec{J}$ and $\vec{A}$. However, only five are independent, as we have three relations:
\begin{eqnarray}
J^2 = \sum_{\mu=1}^3 J_\mu^2, \qquad \vec{A} \cdot \vec{J} = 0, \qquad A^2 = m^2 \beta^2 \, + \, 2 m E \, J^2.  
\end{eqnarray}
The Kepler problem is therefore maximally super-integrable, since $d=3$. 

\end{itemize}

Let us conclude this section by saying that, {\it locally} in phase space, one can reproduce much of the construction we have outlined for generic Hamiltonian systems, which might raise doubts about the peculiarity of integrability. The crucial distinction is that, for integrable systems, the procedure we have described extends {\it globally}. In particular, one has a global foliation of phase space by the $M_f$ submanifolds, and a nice global geometric structure arising from this analysis.  

\section{Algebraic Methods}

In this section we introduce the concepts of Lax pair, Monodromy and Transfer matrix, and Classical $r$-matrix. These quantities prominently feature in the so-called Inverse Scattering Method, which begins by recasting the Poisson brackets of the dynamical variables of a classically integrable system in a form which is most suitable for displaying the structure of its symmetries. At the end of this section we will briefly comment on the issue of {\it (non) ultra-locality} of the Lax-pair Poisson brackets.

\subsection{Lax pairs}

Suppose you can find two matrices $L,M$ such that Hamilton's equations of motion can be recasted in the following form:
\begin{eqnarray}
\label{Lax}
\frac{dL}{dt} \, = \, [M,L].
\end{eqnarray} 
The two matrices $L,M$ are said to form a {\it Lax pair}. From (\ref{Lax}), we can immediately obtain a set of conserved quantites:
\begin{eqnarray}
\label{natural}
O_n \, \equiv \, \tr L^n, \qquad \frac{d O_n}{dt} \, = \, \sum_{i=0}^{n-1} \tr \, L^{i} \, [M,L] L^{n-1-i} \, = \, 0, \qquad \forall \, \, n \, \, \mbox{natural number}, 
\end{eqnarray}
by simply opening up the commutator and using the cyclicity of the trace. Of course not all of these conserved charges are independent. Notice that (\ref{natural}) implies that the eigenvalues $\lambda_\alpha$ of $L$ are conserved in time, since $O_n = \sum_\alpha \lambda_\alpha^n$.

Let us point out that the Lax pair is not unique\footnote{For example, adding constant multiples of the identity to $L$ and $M$ preserves (\ref{Lax}). There also exist particular models where one can describe the system using alternative Lax pairs of different ranks (cf. \cite{MironovEtAl}, pages 2-3).}, as there is at least a gauge freedom
\begin{eqnarray}
L \longrightarrow g \, L \, g^{-1}, \qquad M \longrightarrow g \, M \, g^{-1} + \frac{dg}{dt} \, g^{-1},
\end{eqnarray} 
with $g$ an invertible matrix depending on the phase-space variables.

\bigskip

{\it \underline{Example}}

\medskip

\begin{itemize}

\item A Lax pair for the harmonic oscillator (with mass $m=1$ in appropriate units) can be written down as follows:
\begin{eqnarray}
\label{LMharm}
L = \begin{pmatrix}p&\omega q\\\omega q&-p
\end{pmatrix} = p \, \sigma_3 \, + \, \omega q \, \sigma_1,\qquad M = \begin{pmatrix}0&-\frac{\omega}{2}\\\frac{\omega}{2} &0
\end{pmatrix} = - i \frac{\omega}{2} \, \sigma_2.
\end{eqnarray}
One can immediately check using (\ref{LMharm}) that the only non-zero components of (\ref{Lax}) - those along $\sigma_1$ and $\sigma_3$ - are equivalent to Hamilton's equations of motion. 
\end{itemize}

We will regard $L$ and $M$ as elements of some matrix algebra $\mathfrak{g}$, with the matrix entries being functions on phase space. For example, in the case of the harmonic oscillator (\ref{LMharm}) one can see that $\alg{g}$ is the complexification of the $su(2)$ Lie algebra, which is isomorphic to $sl(2,\mathbb{C})$. 

Even if we assume that we have found a Lax pair, and that we can obtain $d$ independent conserved quantities, this does not yet guarantee their {\it involutivity}. Hence, we have not yet secured integrability. For that, we need an extra ingredient. We introduce the following notation:
\begin{eqnarray}
\label{conve}
X_1 \equiv X \otimes 1, \qquad X_2 \equiv 1 \otimes X
\end{eqnarray}
as elements of $\mathfrak{g} \otimes \mathfrak{g}$. Then, one has the following 

\newtheorem*{Constant r Theorem}{Theorem}

\begin{Constant r Theorem}
\label{costante}
The eigenvalues of $L$ are in involution iff there exists an element $r_{12} \in \mathfrak{g} \otimes \mathfrak{g}$, function of the phase-space variables, such that
\begin{eqnarray}
\label{constantr}
\{ L_1,L_2\} \, = \, [r_{12}, L_1] - [r_{21}, L_2],
\end{eqnarray} 
where $r_{21} = \Pi \circ r_{12}$, $\Pi$ being the permutation operator acting on the two copies of $\mathfrak{g} \otimes \mathfrak{g}$. 
\end{Constant r Theorem}
The proof can be found in \cite{BabelonBernardTalon}.

In order for the Jacobi identity to hold for the Poisson bracket (\ref{constantr}) one needs to impose the following condition, defined on the triple tensor product $\alg{g} \otimes \alg{g} \otimes \alg{g}$: 
\begin{eqnarray}
\label{then}
&&[L_1, [r_{12},r_{13}] + [r_{12},r_{23}] + [r_{32},r_{13}] + \{ L_{2}, r_{13} \} - \{ L_{3}, r_{12} \}] \, + \nonumber\\
&&[L_2, [r_{13},r_{21}] + [r_{23},r_{21}] + [r_{23},r_{31}] + \{ L_{3}, r_{21} \} - \{ L_{1}, r_{23} \}] \, + \nonumber\\
&&[L_3, [r_{31},r_{12}] + [r_{21},r_{32}] + [r_{31},r_{32}] + \{ L_{1}, r_{32} \} - \{ L_{2}, r_{31} \}] \, = \, 0. 
\end{eqnarray}
One can see from here that, if $r_{12}$ is a constant independent of the dynamical variables, and if we furthermore require that
\begin{eqnarray}
r_{12} = - r_{21},
\end{eqnarray}
then we see that a {\it \underline{sufficient}} condition for the Jacobi identity to be satisfied is 
\begin{eqnarray}
\label{cYBE}
[r_{12},r_{13}] + [r_{12},r_{23}] + [r_{13},r_{23}] \, = \, 0. 
\end{eqnarray}
We call such an $r$ a {\it constant classical $r$-matrix}, and (\ref{cYBE}) the {\it classical Yang-Baxter equation (CYBE)}.

Notice that another sufficient condition would be to have (\ref{cYBE}), but with a Casimir element instead of zero on the right hand side. This {\it modified Yang-Baxter equation} would lead us to a more general mathematical setting, which however goes beyond the scope of these lectures. 

\bigskip

{\it \underline{Examples}}

\medskip

\begin{itemize}

\item The following matrix is a constant solution of the CYBE:
\begin{eqnarray}
r = e \otimes h - h \otimes e, \qquad [h,e]=e.
\end{eqnarray}
The algebra it is based upon is the triangular Borel subalgebra of $sl(2)$ generated by the Cartan element $h$ and one of the roots, here denoted by $e$. 

\bigskip

\item The matrix $r_{12}$ for the harmonic oscillator is non-constant (sometimes such $r$-matrices are called {\it dynamical} as opposed to the constant ones), and it reads
\begin{eqnarray}
\label{rarmo}
r_{12} = - \frac{\omega}{4 H} \, \begin{pmatrix}0&1\\-1 &0
\end{pmatrix} \otimes L = - \frac{i \omega}{4 H} \,  \sigma_2 \otimes L,
\end{eqnarray}
with $H=\frac{1}{2}(p^2 + \omega^2 q^2)$ being the energy, and $\{q,p\} =1$ the canonical Poisson bracket. The eigenvalues of $L$ in (\ref{LMharm}) are $\pm 2 H$.

\end{itemize}

\bigskip

The most interesting case for our purposes will be when the Lax pair depends on an additional complex variable $u$, called the {\it spectral parameter}. This means that in some cases we can find a {\it family} of Lax pairs, parameterised by $u$, such that the equations of motion are equivalent to the condition (\ref{Lax}) 
{\it for all values of $u$}. We will see that this fact has significant consequences for the inverse scattering method. Therefore, we are going to put ourselves in this situation from now on.

\bigskip

{\it \underline{Example}}

\medskip

\begin{itemize}

\item A Lax pair for the Kepler problem reads \cite{LaxKepler}
\begin{eqnarray}
\label{LaxKep2}
L = \frac{1}{2} \begin{pmatrix}-\chi_u[\vec{r}, \frac{d}{dt}\vec{r}\, ]&&\chi_u[\vec{r}, \vec{r}\, ]\\-\chi_u[\frac{d}{dt}\vec{r}, \frac{d}{dt}\vec{r}\, ] &&\chi_u[\vec{r},\frac{d}{dt}\vec{r}\, ]\end{pmatrix}, \qquad M = \begin{pmatrix}0&&1\\M_0&&0\end{pmatrix},
\end{eqnarray}
where one has defined
\begin{eqnarray}
\chi_u[\vec{a},\vec{c} \,] \equiv \sum_{\mu=1}^3  \frac{a_\mu c_\mu}{u - u_\mu}, \qquad M_0 \, \vec{r} \equiv - \frac{1}{m} \, \nabla V(r) 
\end{eqnarray}
and $V(r) = \frac{\beta}{r}$. This Lax pair depends on three complex variables $u_\mu$, $\mu=1,2,3$, besides $u$ which we take as a spectral parameter. Laurent-expanding Eq. (\ref{Lax}) in $u$, one recovers the full set of Newton's equations $m \frac{d^2}{dt^2} \vec{r} = - \nabla V(r) = \beta \frac{\vec{r}}{r^3}$.

\end{itemize}

\subsection{Field theories. Monodromy and Transfer matrices.} 

An important step we need to take is to generalise what we have reviewed so far for classical finite-dimensional dynamical systems, to the case of classical {\it field theories}. We will restrict ourselves to two-dimensional field theories, meaning one spatial dimension $x$ and one time $t$. This means that we will now have equations of motion obtained from a classical field theory Lagrangian (Euler-Lagrange equations).

The notion of integrability we gave earlier, based on the Liouville theorem, is inadequate when the number of degrees of freedom becomes infinite, as it is the case for field theories. What we will do is to adopt the idea of a Lax pair, suitably modifying its definition, as a starting point to define an integrable field theory.

Suppose you can find two (spectral-parameter dependent) matrices $L,M$ such that the Euler-Lagrange equations of motion can be recasted in the following form:
\begin{eqnarray}
\label{Lax2}
\frac{\partial L}{\partial t} - \frac{\partial M}{\partial x} \, = \, [M,L].
\end{eqnarray} 
We will call such field theories {\it classically integrable.}

The condition (\ref{Lax2}) is also the compatibility condition for the following {\it auxiliary linear problem}:
\begin{eqnarray}
\label{aux}
(\partial_x - L) \Psi = 0, \qquad  (\partial_t - M) \Psi = 0,
\end{eqnarray}
as can be seen by applying $\partial_t$ to the first equation, $\partial_x$ to the second equation, subtracting the two and using (\ref{aux}) one more time. We will make use of the auxiliary linear problem later on, when we will discuss solitons.

The two matrices $L,M$ in (\ref{Lax2}) are also said to form a {\it Lax pair}, and one can in principle obtain a sequence of conserved quantities for the field theory by following a well-defined procedure. Such a procedure works as follows.

Let us introduce the so-called {\it monodromy matrix} 
\begin{eqnarray}
T(u) = P \exp \Bigg[\int_{s_-}^{s_+} L(x,t,u) dx \Bigg]
\end{eqnarray}
where $P$ denotes a path-ordering with greater $x$ to the left, ${s_-}$ and ${s_+}$ are two points on the spatial line, and $u$ is the spectral parameter. This object can be thought of as implelementing a parallel transport along the segment $[s_-, s_+]$, in accordance with the fact that the Lax pair can be thought of as a connection. 

If we differentiate $T(u)$ with respect to time, we get
\begin{eqnarray}
\partial_t \, T \, &=& \, \int_{s_-}^{s_+} dx \, P \exp \Bigg[\int_x^{s_+} L(x',t,u) dx' \Bigg] \, \, \big[\partial_t \, L(x,t,u)\big] \, \, P \exp \Bigg[\int_{s_-}^x L(x',t,u) dx' \Bigg]\nonumber\\
&=& \, \int_{s_-}^{s_+} dx \, P \exp \Bigg[\int_x^{s_+} L(x',t,u) dx' \Bigg] \, \, \bigg(\frac{\partial M}{dx} \, - \, [L,M]\bigg) \, \, P \exp \Bigg[\int_{s_-}^x L(x',t,u) dx' \Bigg]\nonumber\\
&=& \, \int_{s_-}^{s_+} dx \, \partial_x \Bigg(P \exp \Bigg[\int_x^{s_+} L(x',t,u) dx' \Bigg] \, \, M \, \, P \exp \Bigg[\int_{s_-}^x L(x',t,u) dx' \Bigg]\Bigg)\nonumber\\
&=& M({s_+}) \, T - T \, M({s_-}),
\end{eqnarray}
where at some stage we have used (\ref{Lax2}). At this point, if we consider pushing ${s_-}$ and ${s_+}$ towards the extrema $S_\pm$ of the spatial domain, respectively, and assume for definiteness this domain to be the compact segment $[0,2 \pi]$ with periodic boundary conditions on the fields, we obtain
\begin{eqnarray}
\partial_t \, T \, = \, [M (0,t,u), T].
\end{eqnarray}
This implies that the trace of $T$, called the {\it transfer matrix}
\begin{eqnarray}
\label{Taylor}
\mathfrak{t}(u) \, \equiv \tr T(u),
\end{eqnarray}
is conserved {\it for all $u$}. By expanding in $u$, one obtains a family of conserved charges, which are the coefficients of the expansion. For instance, if $\mathfrak{t}(u)$ is analytic near the origin, one Taylor-expands
\begin{eqnarray}
\mathfrak{t}(u) = \sum_{n\geq 0} Q_n \, u^n, \qquad \partial_t Q_n = 0, \, \, \forall \, n \geq 0.
\end{eqnarray}
This forms the starting point for the construction of the integrable structure.
 
\bigskip

{\it \underline{Examples}}

\medskip

\begin{itemize}

\item The Non-linear Schr\"odinger (NLS) model\footnote{We will follow \cite{Evgeny} in this example.} is the classical non-relativistic $1+1$ dimensional field theory with Hamiltonian
\begin{eqnarray}
\label{NLS}
H = \int_{-\infty}^\infty dx \, \bigg( \left\vert \frac{\partial \psi}{\partial x} \right\vert^2 + \kappa  |\psi |^4 \bigg)
\end{eqnarray}
for a complex field $\psi(x)$ with Poisson brackets
\begin{eqnarray}
\{ \psi(x), \psi^*(y)\} = \delta(x-y),
\end{eqnarray} 
and a real coupling constant $\kappa$.
The name {\it non-linear Schr\"odinger} model comes from the fact that the equations of motion look like
\begin{eqnarray}
i \frac{\partial \psi}{\partial t} = \{H,\psi\} = \frac{\partial^2 \psi}{\partial x^2} + 2 \kappa |\psi |^2 \psi,
\end{eqnarray}
namely they coincide with the Schr\"odinger equation for a non-linear potential depending on the square modulus of the wave function itself.

The Lax pair reads
\begin{eqnarray}
L = \begin{pmatrix}- i \frac{u}{2} &&i \kappa \psi^*\\- i \psi&&i \frac{u}{2} \end{pmatrix}, \qquad M = \begin{pmatrix}i \frac{u^2}{2} + i \kappa |\psi|^2&&\kappa \frac{\partial \psi^*}{\partial x} - i \kappa u \psi^*\\\frac{\partial \psi}{\partial x} + i u \psi&&-i \frac{u^2}{2} - i \kappa |\psi|^2 \end{pmatrix},
\end{eqnarray}
depending on a complex spectral parameter $u$. One can write the monodromy matrix as
\begin{eqnarray}
\label{TNLS}
T(u) = \begin{pmatrix}a(u)&& \kappa \, b^*(u^*)\\b(u)&&a^*(u^*)\end{pmatrix},
\end{eqnarray} 
where $a(u)$ and $b(u)$ admit the following power-series representation:
\begin{eqnarray}
&&a(u) = e^{- i \frac{u}{2} (s_+ - s_-)} \Bigg[ 1 + \sum_{n=1}^\infty \kappa^n \int_{s_+ > \xi_n > \eta_n > \xi_{n-1}...>\eta_1 > s_-} d\xi_1 ... d\xi_n \, d\eta_1 ... d\eta_n \nonumber \\
&&\qquad \qquad \qquad \qquad \qquad e^{i u (\xi_1 + ... + \xi_n - \eta_1 - ... - \eta_n)} \psi^*(\xi_1) ... \psi^*(\xi_n) \, \psi(\eta_1) ... \psi(\eta_n)\Bigg],\nonumber \\
&&b(u) = - i \, e^{i \frac{u}{2} (s_+ + s_-)} \Bigg[ 1 + \sum_{n=1}^\infty \kappa^n \int_{s_+ > \eta_{n+1} > \xi_n > \eta_n > \xi_{n-1}...>\eta_1 > s_-} d\xi_1 ... d\xi_n \, d\eta_1 ... d\eta_{n+1} \nonumber \\
&&\qquad \qquad \qquad \qquad \qquad e^{i u (\xi_1 + ... + \xi_n - \eta_1 - ... - \eta_{n+1})} \psi^*(\xi_1) ... \psi^*(\xi_n) \, \psi(\eta_1) ... \psi(\eta_{n+1})\Bigg].\nonumber
\end{eqnarray}

Some of the conserved charges one obtains by an appropriate expansion\footnote{\label{fooNLS}In this example one actually Laurent-expands \begin{eqnarray}
\label{gene}
\log a(u) = i \kappa \sum_{m=1}^\infty \mathfrak{I}_m u^{-m}.
\end{eqnarray} 
The justification for this is that $a(u)$ itself commutes with the Hamiltonian, and can be proven to be an equally good generating function for the conserved charges \cite{SZ,T}. Moreover, one can show that $\{a(u),a(u')\}=0$, hence the charges generated by  (\ref{gene}) are all in involution with each other. This will be made more systematic in the following sections, where it will be seen to follow from the {\it Sklyanin exchange relations}.}, and in the limit of infinite domain $s_\pm \to \pm \infty$, read
\begin{eqnarray}
\label{char}
&&\mathfrak{I}_1 = \int_{-\infty}^\infty dx |\psi|^2, \qquad \mathfrak{I}_2 = \frac{i}{2} \int_{-\infty}^\infty dx \bigg(\frac{\partial\psi^*}{\partial x} \psi - \psi^* \frac{\partial\psi}{\partial x}\bigg), \qquad \mathfrak{I}_3 = H,\nonumber\\
&&\mathfrak{I}_4 = i \int_{-\infty}^\infty dx \bigg(\frac{\partial^3\psi^*}{\partial x^3} \psi - 3 \frac{\partial\psi^*}{\partial x} \psi |\psi|^2\bigg), \qquad \mbox{etc.}
\end{eqnarray}
Upon quantisation, the first charge corresponds to the particle number, the second one to the momentum, the third one to the Hamiltonian.
 
Action-angle type variables\footnote{Let us quote Sklyanin's original words \cite{Evgeny}: {\it ``The concept of ``action-angle variables" we shall treat here broadly, calling such any canonical variables in which the Hamiltonian $H$ can be written as a quadratic form (and the equations of motion, correspondingly, become linear)."}} for the NLS can be obtained in the following fashion. 
If we define
\begin{eqnarray}
\varphi(u) = \frac{b(u)}{|b(u)|} \sqrt{\frac{\log a(u)}{\kappa \pi}},
\end{eqnarray}
then this new variable is such that
\begin{eqnarray}
\{\varphi(u), \varphi^*(u')\} = i \delta (u - u'), \qquad \frac{d}{dt} \varphi(u) = - i u \, \varphi(u). 
\end{eqnarray}
The second equation guarantees that $\varphi$ is the exponential of an angle variable.  

In the new variables, the infinite tower of conserved charges - cf. (\ref{char}) - collectively read:
\begin{eqnarray}
\mathfrak{I}_m = \int_{-\infty}^\infty d \mu \, \mu^{m-1} |\varphi(\mu)|^2, \qquad m=1,2,...
\end{eqnarray}

 \item The Sine-Gordon equation for a real scalar field $\phi$ in $1+1$ space-time dimensions reads
\begin{eqnarray}
\label{SG}
\partial_t^2 \phi \, - \, \partial_x^2 \phi \, = \, - \, \frac{8 m^2}{\beta} \, \sin (2 \beta \phi),
\end{eqnarray}
for $m$ and $\beta$ two constants. One can recast this equation as a Lax pair condition, with a Lax pair given by
\begin{eqnarray}
&&L = i \, \begin{pmatrix}\frac{\beta}{2} \partial_t \phi &&m u \, e^{i \beta \phi} - \frac{m}{u} e^{- i \beta \phi}\\m u \, e^{- i \beta \phi} - \frac{m}{u} e^{i \beta \phi}&&- \frac{\beta}{2} \partial_t \phi \end{pmatrix},\nonumber\\
&&M = i \, \begin{pmatrix}\frac{\beta}{2} \partial_x \phi &&-m u \, e^{i \beta \phi} - \frac{m}{u} e^{- i \beta \phi}\\-m u \, e^{- i \beta \phi} - \frac{m}{u} e^{i \beta \phi}&&- \frac{\beta}{2} \partial_x \phi \end{pmatrix},
\end{eqnarray}
depending on a spectral parameter $u$.
\end{itemize}

\subsection{Poisson structure and the problem of Non-Ultralocality}

In the spirit of (\ref{constantr}) we now consider the Poisson brackets between two Lax pair elements $L$, this time taken at two distinct positions $x$ and $y$ and for different spectral parameters $u$ and $u'$. Suppose that the canonical Poisson brackets imposed on the fields have the following consequence for $L$:
\begin{eqnarray}
\label{nonconstr}
\{ L_1(x,t,u), L_2(y,t,u')\} \, = \, [r_{12}(u - u'), L_1(x,t,u) + L_2(y,t,u')] \, \delta(x-y),
\end{eqnarray}  
with similar conventions as in (\ref{conve}). Let us also assume that the $r$-matrix $r_{12}(u - u')$ does not itself depend on the fields\footnote{In principle, we shall not dub this $r$-matrix as {\it constant} any longer, because of the dependence on the spectral parameter.}, and satisfies
\begin{eqnarray}
r_{12}(u-u') = - r_{21}(u'-u).
\end{eqnarray}

\newtheorem*{Evgeny}{Theorem (Sklyanin Exchange Relations)}

\begin{Evgeny}
\label{Evgeny1}
Given (\ref{nonconstr}), the Poisson brackets of the monodromy matrix satisfy
\begin{eqnarray}
\label{TEvge}
\{ T_1(u), T_2(u')\} \, = \, [r_{12}(u - u'), T_1(u) T_2(u')].
\end{eqnarray}  
\end{Evgeny}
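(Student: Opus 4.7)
The plan is to lift the local, infinitesimal Poisson bracket (\ref{nonconstr}) to a finite statement about the path-ordered exponential by a continuity-in-the-endpoint argument. Introduce the partial monodromy
\begin{eqnarray}
T(x,y;u) \equiv P\exp\Bigl[\int_y^x L(z,t,u)\,dz\Bigr], \qquad s_- \le y \le x \le s_+,
\end{eqnarray}
so that the full monodromy is $T(u) = T(s_+,s_-;u)$ and $T(x,y;u)$ obeys
\begin{eqnarray}
\partial_x T(x,y;u) = L(x,t,u)\,T(x,y;u), \qquad T(y,y;u) = \mathbbm{1}.
\end{eqnarray}
Define the two candidate quantities
\begin{eqnarray}
A(x) \equiv \{T_1(x,s_-;u),\,T_2(x,s_-;u')\}, \qquad B(x) \equiv [r_{12}(u-u'),\,T_1(x,s_-;u)\,T_2(x,s_-;u')].
\end{eqnarray}
The goal is to prove $A(s_+) = B(s_+)$, which is precisely (\ref{TEvge}), by showing that $A$ and $B$ satisfy the same linear ODE in $x$ with the same initial value.

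The initial condition at $x = s_-$ is immediate: $T(s_-,s_-) = \mathbbm{1}$ is constant, so $A(s_-) = 0 = [r_{12},\mathbbm{1}] = B(s_-)$. For the ODE, the derivative of $B$ is straightforward, $\partial_x B = [r_{12},\,(L_1(x)+L_2(x))\,T_1 T_2]$, because $r_{12}$ does not depend on the dynamical fields. For $A$, the Leibniz rule for the Poisson bracket gives
\begin{eqnarray}
\partial_x A = \{L_1(x)T_1,T_2\} + \{T_1,L_2(x)T_2\} = (L_1(x) + L_2(x))\,A + \{L_1(x),T_2\}\,T_1 + T_2\,\{T_1,L_2(x)\} + \cdots,
\end{eqnarray}
where the cross brackets $\{L_1(x),T_2(x,s_-;u')\}$ are evaluated by expanding $T_2$ as a path-ordered exponential and using the hypothesis (\ref{nonconstr}). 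Substituting the local bracket gives an integral of $[r_{12},L_1(x)+L_2(z)]$ against $\delta(x-z)$ over $z \in [s_-,x]$, and after using $\partial_x(T_1T_2) = (L_1+L_2)T_1T_2$ to recognise the resulting expression as a commutator with $T_1T_2$, the combination should collapse to the same right-hand side as $\partial_x B$, so that $A - B$ satisfies a homogeneous linear ODE with vanishing initial data and hence is identically zero.

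The main obstacle is the well-known half-delta ambiguity: the integration of $\delta(x-z)$ over $[s_-,x]$ sits exactly on the boundary, and naively produces a factor $\tfrac12$ rather than $1$, which would spoil the matching of $\partial_x A$ with $\partial_x B$. The cleanest resolution is to discretise the interval into $N$ cells of width $\Delta = (s_+-s_-)/N$, replace $L$ by step values $L_n$ with $\{L_{1,n},L_{2,m}\} = \Delta^{-1}\delta_{nm}[r_{12},L_{1,n}+L_{2,n}]$, and write $T = \prod_n(\mathbbm{1}+\Delta L_n)$. In this regularisation only same-cell brackets contribute, the $\Delta^{-1}$ from the Kronecker delta cancels exactly one power of $\Delta$ from the cell measure, and the surviving sum telescopes via $\sum_n T(s_+,n\Delta)[r_{12},L_{1,n}+L_{2,n}]T(n\Delta,s_-)\Delta$ into $[r_{12},T_1T_2]$ as $N\to\infty$. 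This unambiguously fixes the boundary prescription and yields (\ref{TEvge}); the continuous-argument derivation sketched above can then be regarded as its formal shorthand.
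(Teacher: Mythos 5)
The paper never proves this theorem: it is stated and immediately used (to derive $\{\mathfrak{t}(u),\mathfrak{t}(u')\}=0$), with the reader implicitly deferred to the literature, so there is no in-text proof to compare against. Your argument is correct, and it is worth recording how it relates to the standard derivation in the paper's main reference \cite{BabelonBernardTalon}. There, one writes the variation of the path-ordered exponential as a \emph{double} integral, $\{T_1,T_2\}=\int\!\!\int dz\,dz'\,T_1(s_+,z)T_2(s_+,z')\{L_1(z),L_2(z')\}T_1(z,s_-)T_2(z',s_-)$, lets the ultralocal $\delta(z-z')$ collapse it to a single integral over the diagonal, and observes that the resulting integrand is exactly $\partial_z\big(T_1(s_+,z)T_2(s_+,z)\,r_{12}\,T_1(z,s_-)T_2(z,s_-)\big)$, so the integral evaluates to the boundary terms $[r_{12},T_1T_2]$. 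Because the delta is integrated over the interior of the square rather than up to a moving endpoint, that route never meets the half-delta ambiguity that your endpoint-ODE formulation runs into. Your resolution of that ambiguity — the lattice regularisation with $\{L_{1,n},L_{2,m}\}=\Delta^{-1}\delta_{nm}[r_{12},L_{1,n}+L_{2,n}]$ and the telescoping of $\sum_n A_{>n}\big(r_{12}B_n-B_nr_{12}\big)A_{<n}$ with $B_n=(1+\Delta L_{1,n})(1+\Delta L_{2,n})$ — is sound and is in fact the discretisation underlying the quantum inverse scattering method, so it buys you something the continuum proof does not: an unambiguous distributional prescription and a form that survives quantisation into the RTT relations (\ref{TEvge2}). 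Two small points of care: (i) the ODE satisfied by $A$ and $B$ is inhomogeneous, $\partial_x Y=(L_1+L_2)Y+[r_{12},L_1(x)+L_2(x)]\,T_1T_2$, so the uniqueness argument should be phrased for the difference $A-B$, which obeys the homogeneous equation with vanishing initial data; (ii) in the Leibniz step the factors $\{L_1(x),T_2\}$ and $T_1$ do not commute entrywise, so the ordering of the cross terms must be tracked — but since you ultimately declare the continuum computation to be shorthand for the discretised one, where all orderings are fixed by the product structure, this does not affect the conclusion.
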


From this, one can immediately conclude that the conserved charges generated by the transfer matrix are all in involution. Indeed, tracing by $\tr_1 \otimes \tr_2$ both sides of (\ref{TEvge}), one obtains
\begin{eqnarray}
\label{EvgeT}
\{\mathfrak{t}(u), \mathfrak{t}(u')\} = 0,
\end{eqnarray} 
where we have used cyclicity of $\tr_1 \otimes \tr_2$ which is the natural trace operation on $\mathfrak{g} \otimes \mathfrak{g}$. By expanding 
(\ref{EvgeT}) we obtain the desired involution property of the charges. For analytic functions,
\begin{eqnarray}
\mathfrak{t}(u) = \sum_{n\geq 0} Q_n \, u^n, \qquad \{Q_n, Q_m\} = 0 \, \, \, \, \, \forall \, \, m,n \geq 0.
\end{eqnarray} 
The variables $L$ and $T$ can therefore be thought of as the most convenient variables to display the integrable structure of the model. It will not come as a surprise then that quantisation best proceeds from the Sklyanin relations, in what constitutes the backbone of the Quantum Inverse Scattering Method (QISM) \cite{Evgeny}.

With the assumptions described after (\ref{nonconstr}), the Jacobi identity for Sklyanin's exchange relations admits again as a sufficient condition the {\it classical Yang-Baxter equation} with spectral parameter, namely
\begin{eqnarray}
\label{diffYBE}
[r_{12}(u_1 - u_2),r_{13}(u_1 - u_3)] + [r_{12}(u_1 - u_2),r_{23}(u_2 - u_3)] + [r_{13}(u_1 - u_3),r_{23}(u_2 - u_3)] \, = \, 0.\nonumber \\ 
\end{eqnarray}
The Poisson brackets (\ref{nonconstr}) are called {\it ultra-local}, because they only display the Dirac delta function and not its derivatives. Whenever higher derivatives of $\delta(x-y)$ are present, one speaks of {\it non ultra-local} Poisson structures. In the latter case, one cannot obtain a formula like (\ref{TEvge}), and quantisation does not proceed along the standard lines of the QISM. 

\bigskip

\noindent {\it Maillet brackets}

\bigskip 

\noindent There is a situation where a significant amount of progress has been made, despite the presence of non ultra-locality. This is when the Poisson brackets between the spatial component of the Lax pair assume the form 
\begin{eqnarray}
\label{brac}
&&\{ {{L}}_1 (x,t,u) , {{L}}_2 (y,t,u') \} \, = \, \delta(x - y) \, [r_- (u,u'), {{L}}_1 (x,t,u)]\\
&&\qquad + \, \delta(x - y) \, [r_+ (u,u'), {{L}}_2 (y,t,u')] \, + \, \delta'(x - y) \, \Big(r_- (u,u') - r_+(u,u')\Big), \nonumber  
\end{eqnarray}
for a choice of an $(r,s)$-matrix pair satisfying a mixed Yang-Baxter type equation:
\begin{eqnarray}
r_+ \, = \, r+s, \qquad r_- \, = \, r-s,
\end{eqnarray}
\begin{eqnarray}
\label{mcYBE}
&&[(r+s)_{13}(u_1, u_3), (r-s)_{12}(u_1, u_2)] \, + \, [(r+s)_{23}(u_2, u_3), (r+s)_{12}(u_1, u_2)] \, \nonumber\\
&&\qquad \qquad \qquad + \,
[(r+s)_{23}(u_2, u_3), (r+s)_{13}(u_1, u_3)] \, = \, 0
\end{eqnarray}
(this is again to ensure the Jacobi identity of the brackets).
The Poisson brackets for the (classical) monodromy matrix $T\equiv P \exp \int {{L}}$ have been derived from (\ref{brac}) by careful treatment of the ambiguity arising from the non ultra-locality \cite{Maillet1,Maillet2}, and read\footnote{It is important to remark that the brackets (\ref{Mbrac}) do not satisfy the Jacobi identity. This violation is connected to a naive equal-point limiting procedure, occurring when one first evaluates $\{T(u) \otimes 1, 1 \otimes T(u')\}$ for different values of $s_\pm$ in each of the two factors. A careful treatment of this singularity is provided in \cite{Maillet2}, where a way to restore the Jacobi identity is defined via a more elaborated symmetric-limit procedure.}
\begin{eqnarray}
\label{Mbrac}
&&\{T(u) \otimes 1, 1 \otimes T(u')\}\, = \, [r(u,u'), T(u) \otimes T(u')] \\
&&\qquad - \, [1 \otimes T(u')] \, s(u,u') \, [T(u) \otimes 1] \, + \, 
[T(u) \otimes 1] \, s(u,u') [1 \otimes T(u')].\nonumber
\end{eqnarray}
By taking the trace of (\ref{Mbrac}), one can still show that an infinite set of classically conserved charges in involution is generated by $\tr T(u)$. The problem is that no quantisation procedure has been so far fully established for the brackets (\ref{Mbrac}). The quantum S-matrix can nevertheless be fixed by symmetries in most of the interesting cases.

\bigskip

{\it \underline{Example}}

\medskip

\begin{itemize}

\item The Principal Chiral Model provides a standard example of Maillet structure \cite{Maillet1,Niall}. This is the theory of an element $g$ of a compact group G, with Lagrangian
$$
{\cal{L}}=-\frac{1}{2 \gamma} \tr j_\mu \, j^\mu, \qquad j_\mu = (\partial_\mu g) g^{-1}
$$
admitting (left,right) global symmetry $g \rightarrow \big( e^{i f} g, \, g \, e^{i f}\big)$. In Minkowski signature, $x^0 = t$, $x^1 = x$, and conservation of $j$ reads $\partial_\mu j^\mu = \partial_0 j_0 - \partial_1 j_1 = 0$. The constant $\gamma$ is the coupling of the theory.
Both $f$ and the (left,right) currents
$$j_\mu^L = j_\mu = (\partial_\mu g) g^{-1},  \qquad j_\mu^R = \, - g^{-1} \, (\partial_\mu g)$$
belong to the Lie algebra of G. The currents are flat (cf. {\it Maurer-Cartan} one-forms), {\it i.e.}
\begin{eqnarray}
\partial_\mu j^L_\nu - \partial_\nu j^L_\mu - [j^L_\mu, j^L_\nu] = 0, \qquad \partial_\mu j^R_\nu - \partial_\nu j^R_\mu - [j^R_\mu, j^R_\nu] = 0.
\end{eqnarray}
The Lax pair reads
\begin{eqnarray}
L = \frac{u \, j_0 \, + \, j_1}{1 - u^2}, \qquad M = \frac{u \, j_1 \, + \, j_0}{1 - u^2}.
\end{eqnarray}
The $(r,s)$ pair reads
\begin{eqnarray}
&&r (u, u') \, = \frac{1}{2} \, \frac{\zeta(u) + \zeta(u')}{u-u'} \, {\cal{C}_\otimes}\,\qquad s (u, u') \, = \frac{1}{2} \, \frac{\zeta(u) - \zeta(u')}{u-u'} \, {\cal{C}_\otimes},
\end{eqnarray}
with
\begin{eqnarray}
\label{K}
\zeta(u) = \gamma \frac{u^2}{u^2 - 1}, \qquad {\cal{C}_\otimes} = \sum_{a,b} \kappa_{a b} \,\,  t_a \otimes t_b,
\end{eqnarray}
in terms of the Lie algebra generators $t^a$. Indices in (\ref{K}) are saturated with the Killing form $\kappa_{a b}$ (see footnote \ref{Killi} in the following). 

\end{itemize}

\bigskip

We would like to conclude this section by mentioning that the Lax-pair formalism allows to derive a set of relations, called {\it finite-gap equations}, which are the first step to a semi-classical analysis of the spectrum of the integrable system. This is treated in detail in Fedor Levkovich-Maslyuk's lectures at this school \cite{Fedor}, where the finite-gap equations are derived by taking a semi-classical limit of the quantum Bethe ansatz equations.

\section{Classical $r$-matrices}

In this section, we discuss the properties of classical $r$-matrices, most notably their analytic structure and their relation to infinite-dimensional algebras. The highlights of this section will be the famous Belavin-Drinfeld theorems.

\subsection{Belavin-Drinfeld theorems}

Mathematicians have studied classical $r$-matrices and have classified them under specific assumptions. We begin by presenting the most important theorems in this area \cite{BelavinDrinfeld1,BelavinDrinfeld2,BelavinDrinfeld3}.

\newtheorem*{BD1}{Theorem (Belavin Drinfeld I)}

\begin{BD1}
\label{BD11} Let $\alg{g}$ be a finite-dimensional simple\footnote{A Lie algebra is simple when it has no non-trivial ideals, or, equivalently, its only ideals are $\{ 0 \}$ and the algebra itself. An {\it ideal} is a subalgebra such that the commutator of the whole algebra with the ideal is contained in the ideal.} Lie algebra, and $r=r(u_1 - u_2) \in \alg{g}\otimes \alg{g}$ a solution of the (spectral-parameter dependent) classical Yang-Baxter equation (\ref{diffYBE}). Furthermore, assume one of the following three equivalent conditions to hold: 
\begin{itemize}
\item (i) r has at least one pole in the complex plane $u \equiv u_1 - u_2$, and there is no Lie subalgebra $\alg{g}'\subset \alg{g}$ such that $r \in \alg{g}' \otimes \alg{g}'$ for any $u$, 
\item (ii) $r(u)$ has
a simple pole at the origin, with residue proportional to
$\sum_a t_a \otimes t_a$, with $t_a$ a basis in $\alg{g}$ orthonormal with respect to a chosen nondegenerate invariant bilinear form\footnote{Such a residue can be identified with the quadratic Casimir ${\cal{C}_\otimes}$ in $\alg{g} \otimes \alg{g}$.},
\item (iii) the determinant of the matrix $r_{a b}(u)$ obtained from $r(u) = \sum_{a b} r_{a b}(u)\, t_a \otimes t_b$ does not vanish identically. 
\end{itemize}
Under these assumptions, $r_{12}(u) = - r_{21}(- u)$ where $r_{21}(u) = \Pi \, \circ \, r_{12}(u)= \sum_{a b} r_{a b}(u)\, t_b \otimes t_a$, and $r(u)$ can be extended meromorphically to the entire
$u$-plane. All the poles of $r(u)$ are simple, and they form a lattice $\Gamma$.
One has three possible equivalence classes of solutions: ``elliptic" - when $\Gamma$ is a
two-dimensional lattice -, ``trigonometric" - when $\Gamma$ is a one-dimensional array -, or ``rational"- when $\Gamma = \{0\}$-, respectively.
\end{BD1}

The assumption of difference-form is not too restrictive, thanks to the following theorem by the same authors \cite{BelavinDrinfeld3}:

\newtheorem*{BD2}{Theorem (Belavin Drinfeld II)}

\begin{BD2}
\label{BD22} Assume the hypothesis of Belavin-Drinfeld I theorem but $r=r(u_1,u_2)$ {\bf not} to be of difference form, with the classical Yang-Baxter equation being the natural generalisation of (\ref{diffYBE}):
\begin{eqnarray}
\label{ndiffYBE}
[r_{12}(u_1,u_2),r_{13}(u_1,u_3)] + [r_{12}(u_1,u_2),r_{23}(u_2,u_3)] + [r_{13}(u_1,u_3),r_{23}(u_2,u_3)] \, = \, 0.
\end{eqnarray}
Now the three statements (i), (ii) and (iii) are not any longer immediately equivalent, and we will only retain (ii).
Assume the dual Coxeter number\footnote{\label{Killi}The dual Coxeter number $\alg{c}_2$ is defined as $\sum_{a b} f_{a b c} \, f_{a b d} =\alg{c}_2 \, \delta_{c d}$, and it is related to trace of the quadratic Casimir in the adjoint representation, {\it i.e.} $\sum_a [t_a, [t_a,x]] = \alg{c}_2 \, x$, $\forall \, x \in \alg{g}$. The Killing form is nothing else but $\kappa_{c d} = \sum_{a b}f_{a b c} \, f_{a b d} = \alg{c}_2 \, \delta_{c d}$.} of $\alg{g}$ to be non vanishing. Then, there exists a transformation which reduces $r$ to a difference form. 
\end{BD2}

\begin{proof}
Without loss of generality, we can assume that the $r$-matrix will behave as
\begin{eqnarray}
r \sim \frac{\sum_a t_a \otimes t_a}{u_1 - u_2} + g(u_1,u_2)
\end{eqnarray}
near the origin. If it does not, then
\begin{eqnarray}
r \sim \frac{\xi(u_1) \sum_a t_a \otimes t_a}{u_1 - u_2} + g(u_1,u_2),
\end{eqnarray}
and the change of variables 
\begin{eqnarray}
\label{diue}
u = u(v), \qquad u'(v) = \xi(u(v))
\end{eqnarray}
will make the residue equal to $1$. In fact, near $v_1 = v_2$, one has
\begin{eqnarray}
\frac{\xi(u_1)}{u_1 - u_2} \sim \frac{\xi(u_1)}{u(v_2) + u'(v_2)(v_1 - v_2) - u(v_2)} \sim \frac{1}{v_1 - v_2}
\end{eqnarray}
due to (\ref{diue}).
The function $g$ can be taken to be holomorphic in a sufficiently small neighbourhood of the origin. 

Expanding (\ref{ndiffYBE}) near the point $u_2 = u_3$, we get
\begin{eqnarray}
&&[r_{12}(u_1,u_2), r_{13}(u_1,u_2)] + [r_{12}(u_1,u_2) + r_{13}(u_1,u_2), g_{23}(u_2,u_2)] +\nonumber \\
&&\qquad \Bigg[\sum_{a b} r_{a b}(u_1,u_2) t_{a} \otimes \Delta(t_b), 1 \otimes \frac{\sum_c t_c \otimes t_c}{u_2 - u_3}\Bigg] + \Bigg[\partial_{u_2} r_{12}(u_1,u_2), 1 \otimes \sum_a t_a \otimes t_a\Bigg]=0\nonumber
\end{eqnarray}
where $\Delta(t_a) = t_a \otimes 1 + 1 \otimes t_a$ coincides with the {\it trivial coproduct} on $\alg{g}$, when $\alg{g}$ is regarded as a {\it bialgebra} \cite{Chari:1994pz}. However, the third term cancels out because $\sum_a t_a \otimes t_a$ is the quadratic Casimir of $\alg{g}\otimes \alg{g}$, hence $[\Delta(t_b),\sum_a t_a \otimes t_a]=0$.

Now we apply the commutator map $x \otimes y \to [x,y]$
to the spaces $2$ and $3$ in the above equation, and use the fact that the dual Coxeter number is non-zero (and equal to 1 if we use appropriate conventions). After using the Jacobi identity, we get
\begin{eqnarray}
\sum_{a b c d}r_{a b}(u_1,u_2) r_{c d}(u_1,u_2) [t_a,t_c]\otimes [t_b,t_d]+[r(u_1,u_2), 1 \otimes h(u_2)] + \partial_{u_2} r(u_1,u_2)=0,
\end{eqnarray}
where $h(u) \equiv g_{a b}(u,u) [t_a, t_b]$.

We can repeat the same process on the variables $u_1$ and $u_2$, and the spaces $1$ and $2$. We obtain
\begin{eqnarray}
\sum_{a b c d}r_{a b}(u_1,u_3) r_{c d}(u_1,u_3) [t_a,t_c]\otimes [t_b,t_d]+[h(u_1) \otimes 1, r(u_1,u_3)] - \partial_{u_1} r(u_1,u_3)=0. 
\end{eqnarray}

In total, 
\begin{eqnarray}
\label{total}
\partial_{u_1} r(u_1,u_2) + \partial_{u_2} r(u_1,u_2) = [h(u_1) \otimes 1 + 1 \otimes h(u_2), r(u_1,u_2)].
\end{eqnarray}
Define an invertible map $\psi(u)$ acting on $\alg{g}$, such that 
\begin{eqnarray}
\frac{d}{du} \psi(u)[x] = \Big[h(u), \psi(u)[x]\Big] \qquad \forall \, \, x \in \alg{g},
\end{eqnarray}
and set
\begin{eqnarray}
\widehat{r}(u_1,u_2) = [\psi^{-1}(u_1) \otimes \psi^{-1}(u_2)] r(u_1,u_2).  
\end{eqnarray}
Since
\begin{eqnarray}
\frac{d}{du} \psi^{-1}(u) = - \psi^{-1} (u) \bigg(\frac{d}{du} \psi(u)\bigg) \psi^{-1} (u),
\end{eqnarray}
(\ref{total}) becomes
\begin{eqnarray}
\label{total1}
\partial_{u_1} \widehat{r}(u_1,u_2) + \partial_{u_2} \widehat{r}(u_1,u_2) = 0,
\end{eqnarray}
which shows that $\widehat{r}$ is of difference form.
\end{proof}

\smallskip

The importance of the two theorems above resides not only in their powerful classification of the possible classical integrable structures associated to simple Lie algebras, but also in how this structure turns out to determine quite uniquely the possible quantisations one can extract. This poses strong constraints on the possible types of infinite-dimensional quantum groups, thereby restricting the classes of quantum integrable systems one can ultimately realise.

Mathematically, the quantisation procedure involves the concept of {\it Lie bialgebras} and the so-called {\it Manin triples} (see for example \cite{EtingofSchiffman} and references therein). The term {\it quantisation} incorporates the meaning of completing the classical algebraic structure to a quantum group, or, equivalently, obtaining from a classical $r$-matrix a solution to the {\it quantum Yang-Baxter Equation}
\begin{eqnarray}
R_{12} \, R_{13} \, R_{23} \, = \, R_{23} \, R_{13} \, R_{12}, \qquad R_{ij} \sim 1 \otimes 1 \, + i \, \hbar \, r_{ij} + {\cal{O}} (\hbar^2).
\end{eqnarray}
The quantisation of the Sklyanin exchange relations is attained by simply {\it ``completing the $\hbar$ series"} into the famous {\it RTT} relations, which will be discussed at length during this school:
\begin{eqnarray}
\label{TEvge2}
\widehat{T}_1(u) \widehat{T}_2(u') R(u-u') \, = \, R(u-u') \widehat{T}_2(u') \widehat{T}_1(u), \qquad \widehat{T}(u) = T(u) + {\cal{O}}(\hbar)
\end{eqnarray} 
where the quantum monodromy $\widehat{T}$ is now understood as the normal-ordering of the classical product integral expression. We can see that (\ref{TEvge2}) tends to (\ref{TEvge}) for $\hbar \to 0$.

The associated quantum groups emerging from this quantisation process are then classified as {\it elliptic quantum groups} ($dim(\Gamma)=2$), {\it quantum affine algebras} ($dim(\Gamma)=1$), and {\it Yangians} ($\Gamma=\{0\}$, respectively)\footnote{A theorem \cite{Chari:1994pz} says that, if $r^3 = 0$, then $R=e^r$ solves the Yang-Baxter equation. In general, $R$ is a very complicated expression, which is the correspondent of the highly non-trivial procedure of quantising a classical Lie bialgebra \cite{Chari:1994pz,EtingofSchiffman}.}. We refer to the lectures by F. Loebbert at this school \cite{Florian} for more details on this.

This is then indeed a mathematical framework for transitioning from the classical to the quantum regime of the physics:
\begin{eqnarray}
\label{qcomm}
\{ A,B\} = \lim_{\hbar \to 0} \, \frac{[A,B]}{i \, \hbar}.
\end{eqnarray}
One could say that for integrable systems one has an explicit exact formula for the r.h.s. of (\ref{qcomm}) {\it as a function of $\hbar$}. In a sense, the Sklyanin exchange relations are the best starting point wherefrom to quantise the theory, in a way that keeps the integrable structure manifest at every step.
 
\subsection{Analytic properties}

In this section, we  discuss the analytic properties of the classical $r$-matrix as a function of the complex spectral parameters.

\bigskip

{\it \underline{Example}}

\medskip

\begin{itemize}

\item A convenient way of displaying the connection between the classical $r$-matrix and the associated quantum group is the case of Yangians. Let us consider the so-called {\it Yang's $r$-matrix}:
\begin{eqnarray}
\label{Yang}
r=\kappa \frac{{\cal{C_\otimes}}}{u_2 - u_1}.
\end{eqnarray}
This turns out to be the prototypical rational solution of the CYBE. Indeed, by definition of the Casimir ${\cal{C}_\otimes}$, one has $[{\cal{C}_\otimes},t^a \otimes \mathbbmss{1} + \mathbbmss{1} \otimes t^a]=0 \, \, \forall \, a$, and one can easily prove that (\ref{Yang}) solves the CYBE. 

This classical $r$-matrix is the one relevant for the non-linear Schr\"odinger model of example (\ref{NLS}), as it was proven by Sklyanin \cite{Evgeny}. Using this fundamental result, it is then an easy exercise to show that, combining (\ref{Yang}) and (\ref{TEvge}) with (\ref{TNLS}), one obtains in particular that $\{a(u),a(u')\}=0$, which was used in footnote \ref{fooNLS}. 

As a matter of fact, upon quantisation the NLS model is found to conserve the particle number, and in each sector of the Fock space with fixed number of particles it reduces to a quantum mechanical problem with mutual delta-function interactions \cite{Evgeny,Thacker}. This was exactly the context where Yang was working \cite{Yang} when he came across the solution (\ref{Yang}). In \cite{Evgeny}, Sklyanin went on to demonstrate that normal-ordering effects quantise the classical $r$-matrix (\ref{Yang}) into the canonical Yangian $R$-matrix (in suitable units):
$$
R = \mathbbmss{1} \otimes \mathbbmss{1} + \frac{i \kappa}{u_2 - u_1} \, {\cal{C}_\otimes},
$$
solution of the quantum Yang-Baxter equation.

\medskip

One can expand the classical $r$-matrix (\ref{Yang}) as follows:
\begin{eqnarray}
\label{Tayl}
&&\frac{r}{\kappa}= \frac{{\cal{C}_\otimes}}{u_2 - u_1} = \frac{\sum_a t_a \otimes t_a}{u_2 - u_1}=\sum_{a}\sum_{n\geq 0} t_a u_1^n \otimes t_a u_2^{-n-1}=\sum_{a}\sum_{n\geq 0} t_{a, n} \otimes t_{a,-n-1},\nonumber
\end{eqnarray}
where we have assumed $|\frac{u_1}{u_2}|<1$ for definiteness. Now we are capable of attributing the dependence on the spectral parameter $u_1$ (respectively, $u_2$) to generators in the first (respectively, second) space. This allows us to interpret the formula (\ref{Yang}) as the representation of an $r$-matrix, which is an abstract object living in the tensor product $\mathcal A_{u_1}[\mathfrak g] \otimes \mathcal A_{u_2}[\mathfrak g]$ of two copies of a larger algebra $\mathcal A_{u}[\mathfrak g]$ constructed out of $\mathfrak g$. The assignement 
\begin{eqnarray}
\label{assign}
t_{a, n} \, = \, u^n \, t_a
\end{eqnarray} 
in (\ref{Tayl}) produces
\begin{eqnarray}
\label{loo}
[t_{a, m} , t_{b, n}] = \,\sum_{c} f_{a b c} \, t_{c, m+n}
\end{eqnarray}
in terms of the structure constants $f_{a b c}$. The relations (\ref{loo}) then identify in this case the algebra $\mathcal A_{u}[\mathfrak g]$ as the {\it loop algebra}\footnote{A generalisation of the loop algebra is the so-called {\it affine Kac-Moody algebra} $\hat{\alg{g}}$, associated to a finite-dimensional Lie algebra $\alg{g}$. To obtain such a generalisation, one allows for a non-trivial central extension $c$. If we denote the generators of the affine Kac-Moody algebra as $s_{a,n} \equiv s_a\otimes v^n$ in terms of a formal parameter $v$, we can then write the defining relations of $\hat{\alg{g}}$ as 
\begin{eqnarray}
\label{aff}
[s_a\otimes v^n, s_b \otimes v^m]=[s_a,s_b]\otimes v^{n+m} + (s_a,s_b) \, n\, \delta_{n,-m} \, c, 
\end{eqnarray}
with $(,)$ the scalar product induced on $\alg{g}$ by the Killing form. One usually adjoins a derivation $d$ to the algebra:
\begin{eqnarray}
[d, s_a\otimes v^n] = n \, s_a\otimes v^n, \qquad d \equiv v \frac{d}{dv}, 
\end{eqnarray}
in order to remove a root-degeneracy (see {\it e.g.} \cite{Goddard:1986bp}).} ${\cal{L}}_u [\alg{g}]$ associated to $\alg{g}$.

One can then make sense of the operation of {\it abstracting} the relations (\ref{loo}) away from the specific representation (\ref{assign}) they are originally seen to emerge from. Using solely these commutation relations, one can then verify that the abstract formal expression\footnote{There are subtleties one needs to be careful about, regarding the convergence of formal series such as (\ref{formale}). An appropriate setting where to discuss such issues is typically provided by the so-called {\it $p$-adic topology} and {\it Poincar{\'e}-Birkhoff-Witt} bases.} 
\begin{eqnarray}
\label{formale}
r=\sum_a \sum_{n\geq 0} t_{a, n} \otimes t_{a,-n-1}
\end{eqnarray}
provides a consistent classical $r$-matrix independently of specific representations of (\ref{loo}). In turn, the universal enveloping algebra $U\big({\cal{L}}_u [\alg{g}]\big)$ of the loop algebra  ${\cal{L}}_u [\alg{g}]$ is nothing else but the classical limit of the Yangian $\Ya{\alg{g}}$:
\begin{eqnarray}
\Ya{\alg{g}} \to U\big({\cal{L}}_u [\alg{g}]\big) \qquad \mbox{as} \, \, \, \, \hbar \to 0.
\end{eqnarray}

\end{itemize}

\medskip

\newtheorem*{Pavel}{Theorem}

\begin{Pavel}
\label{PavelE}
The spans of the generators appearing separately on each factor of $r$ must form two Lie subalgebras of $\alg{g}$. 
\end{Pavel}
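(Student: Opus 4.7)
The plan is to substitute a minimal expansion of $r$ into the classical Yang-Baxter equation (\ref{cYBE}) and then isolate the desired constraint by a projection argument. Write
\[
r \, = \, \sum_i a_i \otimes b_i,
\]
where, after re-collecting terms, $\{a_i\} \subset \alg{g}$ and $\{b_i\} \subset \alg{g}$ are each linearly independent families (a rank decomposition of $r$ viewed as a linear map $\alg{g}^* \to \alg{g}$ produces such a representative). Set $V \equiv \mathrm{span}\{a_i\}$ and $W \equiv \mathrm{span}\{b_i\}$; the goal is to show that both $V$ and $W$ are Lie subalgebras of $\alg{g}$.

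Next I would expand $[r_{12}, r_{13}] + [r_{12}, r_{23}] + [r_{13}, r_{23}] = 0$ directly in terms of the $a_i$'s and $b_i$'s, using the conventions (\ref{conve}). Elementary bookkeeping in the three tensor slots yields
\[
\sum_{i,j} [a_i,a_j] \otimes b_i \otimes b_j \; + \; \sum_{i,k} a_i \otimes [b_i, a_k] \otimes b_k \; + \; \sum_{j,k} a_j \otimes a_k \otimes [b_j, b_k] \; = \; 0.
\]
Now pick a linear complement of $V$ in $\alg{g}$ and let $\pi_V$ denote the projection onto it. Applying $\pi_V \otimes \mathrm{id} \otimes \mathrm{id}$ annihilates the second and third terms, because each of $a_i, a_j, a_k$ lies in $V$, and one is left with
\[
\sum_{i,j} \pi_V\bigl([a_i,a_j]\bigr) \otimes b_i \otimes b_j \; = \; 0.
\]
Since $\{b_i\}$ is linearly independent in $\alg{g}$, the family $\{b_i \otimes b_j\}$ is linearly independent in $\alg{g} \otimes \alg{g}$, so extracting coefficients gives $\pi_V([a_i,a_j]) = 0$ for every $i,j$. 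Hence $[a_i,a_j] \in V$, and $V$ is closed under the bracket. An entirely symmetric projection $\mathrm{id} \otimes \mathrm{id} \otimes \pi_W$ on the third slot gives $[b_j,b_k] \in W$, showing that $W$ is a Lie subalgebra as well.

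The main obstacle is really the care demanded in the first step: one must secure a representative of $r$ in which both families are linearly independent, otherwise the projection step produces no information. Standard linear algebra handles this, but one has to check that the resulting $\{a_i\}$ and $\{b_i\}$ genuinely match what is meant by the ``generators appearing on each factor''. In the spectral-parameter situation the same argument is applied mode-by-mode after Laurent expansion in the spirit of (\ref{formale}); the loop-algebra example discussed above is the prototypical illustration, with $V$ the non-negative and $W$ the strictly negative mode subalgebras of $\mathcal{L}_u[\alg{g}]$.
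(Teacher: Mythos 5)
Your argument is correct, but it is a genuinely different route from the one taken in the text. The proof given here follows Belavin--Drinfeld: it writes $r=\sum_{ab}r_{ab}(u)\,z_a\otimes z_b$, extracts the residue of the spectral-parameter CYBE (\ref{diffYBE}) at the pole $u_1=u_2$ (where only the two terms containing $r_{12}$ survive), and reads off $[z_a,z_c]=\sum_d f_{acd}\,z_d$ from the leading singularity, deferring the remaining checks to the original reference. You instead avoid the pole structure entirely: you take a rank decomposition $r=\sum_i a_i\otimes b_i$ with both families independent, expand the CYBE into its three terms, and kill two of them by projecting the first (respectively third) tensor slot onto a complement of $V=\mathrm{span}\{a_i\}$ (respectively $W=\mathrm{span}\{b_i\}$); linear independence of $\{b_i\otimes b_j\}$ then forces $[a_i,a_j]\in V$, and symmetrically for $W$. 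Your computation of the three commutator terms and the annihilation step are both right, and the projection trick buys you a self-contained, purely algebraic proof that does not presuppose condition (ii) of Belavin--Drinfeld I or even the existence of a pole --- whereas the paper's residue argument is tied to the analytic structure that drives the classification theorem. The one place you are slightly loose is the extension to the spectral-parameter case: rather than Laurent-expanding ``mode by mode'' in the spirit of (\ref{formale}), it is cleaner to note that your projection argument applies verbatim to (\ref{diffYBE}) at each fixed generic triple $(u_1,u_2,u_3)$; writing $A_b(v)=\sum_a r_{ab}(v)\,t_a$, one obtains $\pi_V\bigl([A_b(v),A_d(w)]\bigr)=0$ for all $b,d$ and all $v,w$, and since these elements span $V$, bilinearity of the bracket gives $[V,V]\subseteq V$ directly.
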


\begin{proof}
By writing $r = \sum_{ab} r_{a b} (u) \, z^a \otimes z^b$, with the $z$'s being a subset of the $t$'s, one has that near the pole at $u_1 = u_2$ the CYBE reduces to $$\sum_{a b c d}\frac{c_{a b}(u_1)}{u_1 - u_2} \, r_{c d} (u_1 - u_3) \, ([z_a , z_c]\otimes z_b \otimes z_d + z_a \otimes [z_b , z_c]\otimes z_d)=0,$$ with some function $c_{a b}(u_1)$. This implies $$[z_a , z_c] =\sum_d  f_{a c d} z_d$$ for a subset of the structure constants $f_{a c d}$. In \cite{BelavinDrinfeld1}, the Jacobi identity is shown, which proves that the two spans discussed above form Lie subalgebras of $\alg{g}$. 
\end{proof}

These two span subalgebras together with $\alg{g}$ form the so-called {\it Manin triple}. Characterisation of such a triple is an essential pre-requisite to identify the actual algebra, the quantum group is going to be built upon. In fact, suppose we had started with an $r$-matrix, solution of the classical Yang-Baxter equation, such that, however, none of the three requirements $(i)-(iii)$ in the  Belavin-Drinfeld I theorem held. In particular, this could be because we have identified that $r \in \alg{g} \otimes \alg{g}$, but some of the basis element $t_a$ never actually appear in $r$, causing the determinant $\det r_{a b}$ to vanish by a row of zeroes. The theorem we have just proven reassures us that we can always find a subalgebra $\alg{g}'$ of $\alg{g}$ such that $r \in \alg{g}' \otimes \alg{g}'$. For this restriction, $r$ has now a chance of being non-degenerate.  

\smallskip

Let us finally mention that classical $r$-matrices are core objects in the theory of quantum groups and deformation quantisation, and play a special role in the study of the so-called {\it Drinfeld double} (cf. F. Loebbert's lectures at this school \cite{Florian}). 

\section{Solitons}

In this section, we discuss the soliton solutions of integrable classical field theories. By way of introduction, let us report what is probably the archetype of solitons, contained in John Scott Russell's famous report of an event occurred at the Union Canal, Scotland, in 1834 \cite{Scott}: 

\medskip

{\it``I was observing the motion of a boat which was rapidly drawn along a narrow channel by a pair of horses, when the boat suddenly stopped - not so the mass of water in the channel which it had put in motion; it accumulated round the prow of the vessel in a state of violent agitation, then suddenly leaving it behind, rolled forward with great velocity, assuming the form of a large \underline{solitary elevation}, a rounded, smooth and well-defined heap of water, which continued its course along the channel apparently \underline{without change of form or diminution of} \underline{speed}".}

\medskip

In the language of integrable systems, we can model this situation using the famous Korteweg - de Vries (KdV) equation \cite{KdV1,KdV2} for a wave profile $\phi$ in shallow water\footnote{This form of the equation gets mapped onto the one presented in S. Negro's lectures at this school, upon the identifications $\phi = - U, t = - t_3, x = w$, and compactification of $x$ to $[0,2 \pi]$.}:
\begin{eqnarray}
\label{KdV}
\partial_t \, \phi \, + \, \partial_x^3 \, \phi \, - \, 6 \, \phi \, \partial_x \phi \, =0.
\end{eqnarray}
The KdV equation (\ref{KdV}) admits, as a particular solution\footnote{A more general solution is obtained in terms of the Jacobi elliptic cosine function `cn', hence the name {\it cnoidal wave}.}, a travelling soliton parameterised by two arbitrary real constants $x_0$ and $\upsilon$:
\begin{eqnarray}
\label{upsilon}
\phi \, = \, - \frac{\upsilon}{2} \, \mbox{sech}^2 \, \Bigg[\frac{\sqrt{\upsilon}}{2} \, (x - \upsilon \, t + x_0)\Bigg].
\end{eqnarray}

The Lax pair for the KdV equation is given by 
\begin{eqnarray}
L = \begin{pmatrix}0&-1\\u - \phi &0\end{pmatrix}, \qquad M = \begin{pmatrix}-\phi_x & -4u - 2 \phi\\ 4 u^2 - 2 u \, \phi \, + \phi_{xx} - 2 \phi^2 &\phi_x\end{pmatrix}.
\end{eqnarray}
The extra conservation laws of the integrable hierarchy prevent the waves from loosing their profiles throughout the time evolution. We will now discuss a very general method for solving integrable equations and find soliton solutions in a wide variety of cases. 

\subsection{The classical inverse scattering method}

Let us show how Gardner, Green, Kruskal and Miura solved the KdV equation, with a method which since became a standard procedure for integrable partial differential equations \cite{GardnerGreeneKruskalMiura}. As we pointed out in the Introduction, this method, dubbed of the {\it (classical) inverse scattering}, was afterwards adapted to quantum theories by the Leningrad school, and still to this day represents a paradigmatic approach to the quantisation of integrable systems. In the following section, we will show a more complicated application of the inverse scattering method to the Sine-Gordon equation.

\medskip

The main feature that emerged from early numerical calculations performed on the KdV equation was that there are solutions describing multiple propagating profiles, which nevertheless scatter off each other preserving their individual shapes through the process (see Figure \ref{fig:solitons}). 
\begin{figure}[t]
  \centering
  \begin{tikzpicture}
\begin{scope}[yshift=2.5cm]
\draw[-] (-6cm,0) to (6cm,0);
\draw[-] (0,-0.5cm) to (0,3cm);
\draw[scale=1,domain=-2:2,smooth,variable=\x, color=magenta,thick] plot ({0.8*\x-3},{2*exp(-3*\x*\x)});
\draw[scale=1,domain=-2:2,smooth,variable=\x, color=cyan,thick] plot ({0.8*\x+3},{2*exp(-1.2*\x*\x-3	.3*\x*\x*\x*\x)});
%
\node[color=magenta] at (-2.3,2.3) {$v_1 \longrightarrow$};
\node[color=cyan] at (2.5,2.4) {$\leftarrow v_2$};
\node[box] at (4.8,3.3) {$T\approx -\infty$};
\node[] at (6, 0.3) {$x$};
\node[] at (0.3, 3) {$\phi$};
\end{scope}
\begin{scope}[yshift=-2.5cm]
\draw[-] (-6cm,0) to (6cm,0);
\draw[-] (0,-0.5cm) to (0,3cm);
\draw[scale=1,domain=-2:2,smooth,variable=\x, color=magenta,thick] plot ({-0.8*\x+3},{2*exp(-3*\x*\x)});
\draw[scale=1,domain=-2:2,smooth,variable=\x, color=cyan,thick] plot ({-0.8*\x-3},{2*exp(-1.2*\x*\x-3	.3*\x*\x*\x*\x)});
%
\node[color=magenta] at (3.7,2.3) {$v_1 \longrightarrow$};
\node[color=cyan] at (-3.5,2.4) {$\leftarrow v_2$};
\node[box] at (4.8,3.3) {$T\approx +\infty$};
\node[] at (6, 0.3) {$x$};
\node[] at (0.3, 3) {$\phi$};
\end{scope}
\end{tikzpicture}
  \caption{Scattering of solitons keeping their shapes and velocities.}
  \label{fig:solitons}
\end{figure}
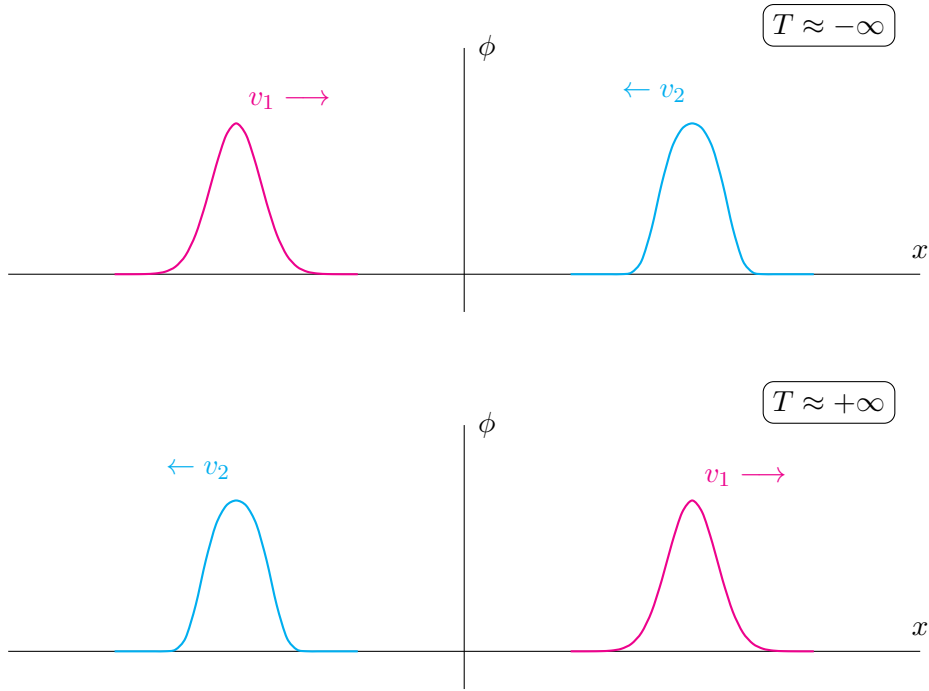
This is quite surprising for a non-linear equation as the one at hand, and it is due to a perfect competition between the non-linearity $\phi \, \partial_x \phi$ (trying to concentrate the profile) and the dispersion $\partial_x^3 \phi$ (trying to spread the profile). It also shows, in a way, how integrability is capable of restoring some features, which might be thought as rather pertaining to a linear behaviour, into a highly non-linear system.   

Gardner {\it et al.} in {\it op.~cit.} consider the auxiliary Schr\"odinger problem
\begin{eqnarray}
\label{silio}
\partial_x^2 \, \psi = \Big(\phi - u(t)\Big) \psi,
\end{eqnarray}
with $\phi$ satisfying (\ref{KdV}). Eq. (\ref{silio}) is equivalent to the first equation of our auxiliary linear problem (\ref{aux}), namely $\partial_x \Psi = L \Psi$, after one takes a further $\partial_x$ on the latter, and then projects it onto the first vector component. Solving for $\phi$ in (\ref{silio}) for $\psi$ not identically zero, and substituting back into (\ref{KdV}) gives
\begin{eqnarray}
\label{ausilio}
(\partial_t u) \, \psi^2 + \, [\psi \, Q_x - \, \psi_x Q]_x  =0, \qquad Q = \partial_t \, \psi \, + \, \partial_x^3 \, \psi \, - \, 3(\phi+u) \, \partial_x \psi.
\end{eqnarray}

One can see that, if $\psi$ vanishes sufficiently fast at $|x| \to \infty$, integrating the first equation in (\ref{ausilio}) on the whole real line implies $\partial_t u = 0$, hence $u$ is a constant {\it spectral parameter}. This means that we are effectively solving for the normalisable part of the spectral problem $\partial_x^2 \, \psi = (\phi -u) \psi$. It also means that we are left with solving
\begin{eqnarray}
\label{ausiliom}
[\psi \, Q_x - \, \psi_x Q]_x  = 0, \qquad \mbox{i.e.} \qquad \psi \, Q_{xx} = Q \, \psi_{xx},
\end{eqnarray}
from (\ref{ausilio}). It is then straightforward to check that, differentiating the equation
\begin{eqnarray}
\label{KdVmq}
Q(x,t) = C(t) \, \psi + D(t) \, \psi \int^x \frac{dx}{\psi^2} 
\end{eqnarray}
twice w.r.t. $x$, for two arbitrary integration constants $C(t)$ and $D(t)$, and re-using (\ref{KdVmq}) once, one obtains (\ref{ausiliom}).

At this point, we assume that $\phi$ vanishes at spatial infinity for any given time. 
\begin{itemize}
\item The normalisable $\psi$ modes, for $C=D=0$, satisfy
\begin{eqnarray}
0 = Q(x,t) \to \partial_t \, \psi \, + \, \partial_x^3 \, \psi \, - \, 3 u \, \partial_x \psi \qquad \mbox{at} \, \, |x| \to \infty
\end{eqnarray}
which is solved by the bound states
\begin{eqnarray}
\label{cnkn}
\psi_n \to c_n \, e^{\pm 4 k_n^3 t \mp k_n x} \qquad \mbox{at} \, \, x \to \pm \infty, \qquad k_n = \sqrt{- u_n}, \, \, \, u_n < 0, 
\end{eqnarray} 
expected to form the discrete part of the auxiliary spectral problem.

\item We can extend our problem to the non-normalisable modes with a wave-like behaviour at spatial infinity, choosing $u$ to be a constant. For this, we can first go back to (\ref{silio}) and deduce for instance, for $k^2 = u>0$,
\begin{eqnarray}
&&\psi \longrightarrow e^{- i k x} + b \, e^{i k x}, \qquad x \to \infty, \nonumber\\
&&\psi \longrightarrow a \, e^{-i k x}, \qquad x \to - \infty.
\end{eqnarray} 
Solutions which are asymptotically plane waves at $x \to \pm \infty$ are called {\it Jost solutions}. Plugging this into (\ref{KdVmq}), one finds as a solution $D=0$, $C= 4i k^3$, and {\it scattering data} $a,b$ determined as
\begin{eqnarray}
\label{ab}
a(k,t) = a(k,0), \qquad b(k,t) = b(k,0) \, e^{8 i k^3 t}.
\end{eqnarray}
This is called the {\it direct scattering problem}.

\end{itemize}

Combining together all this information turns out to be \underline{sufficient} to reconstruct $\phi$. This means that we can reconstruct the potential $\phi$ in the auxiliary Schr\"odinger problem (\ref{silio}) from the scattering data. This is the {\it inverse scattering problem}. In fact, if $K(x,y)$, for $y \geq x$, is a solution of the {\it Gel'fand-Levitan-Marchenko} equation \cite{GelfandLevitan,Marchenko}
\begin{eqnarray}
\label{GLM}
K(x,y)+B(x+y)+\int_{-\infty}^x dz \, K(x,z) \, B(y+z) = 0, 
\end{eqnarray}
where
\begin{eqnarray}
\label{ref}
B(x) = \frac{1}{2 \pi} \int_{-\infty}^\infty dk \, b(k) \, e^{i k x} + \sum_n c_n^2 e^{- 8 k_n^3 t} \, e^{k_n x}
\end{eqnarray}
in terms of the coefficients $b$ in (\ref{ab}) and $c_n,k_n$ in (\ref{cnkn}), then one has
\begin{eqnarray}
\label{GLM1}
\phi = 2 \frac{d}{dx} K(x,x).
\end{eqnarray} 

The theory behind the Gel'fand-Levitan-Marchenko equation is deeply rooted in the technology which allows one to reconstruct the potential of a given Schr\"odinger problem from the knowledge of its reflection and transmission coefficients, which both feature in (\ref{ref}). The classical inverse scattering method is also regarded as a generalisation of the Fourier transform to non-linear problems. Let us sketch here an argument\footnote{A rigorous derivation of the Gel'fand-Levitan-Marchenko equation is beyond the scope of these lectures. For a wider context, the interested reader can for instance consult \cite{Calogero,Nettel}. Here, we found it convenient to follow a discussion by Terry Tao \cite{Tao}.} that motivates formulae (\ref{GLM})-(\ref{GLM1}) in a simplified case. Consider the spectral problem
\begin{eqnarray}
\label{ansa}
- \frac{\partial^2}{\partial t^2} w + \frac{\partial^2}{\partial x^2} w = V(x) w,
\end{eqnarray}
where $V(x)$ has a compact support $[-R,R]$ in the spatial direction $x$. This means that, in the regions $x<-R$ and $x>R$, $w$ satisfies the free wave equation, hence we can write
\begin{eqnarray}
&&w = f_-(x-t) + g_-(x+t), \qquad x <-R, \nonumber\\
&&w = f_+(x-t) + g_+(x+t), \qquad x > R.
\end{eqnarray}
Consider now two different solutions, characterised by the following asymptotics:
\begin{itemize}
\item {\it Case 1}
\begin{eqnarray}
&&w = \delta(x-t), \qquad t <<-R, \nonumber\\
&&w =  g_-(x+t) + f_+(x-t), \qquad t >> R.
\end{eqnarray}
For finite speed of propagation, this solution vanishes for $x>t$.
\item {\it Case }2
\begin{eqnarray}
&&w = \delta(x-t) + g_+(x+t), \qquad t<<-R, \nonumber\\
&&w =  f_+(x-t), \qquad t >> R.
\end{eqnarray}
\end{itemize}
In both cases (which, in a sense, can be thought of as being dual to each other), the functions $f_\pm$ and $g_\pm$ appearing will be pictured as some sufficiently localised profiles. Let us give a special name to the function $g_-$, {\it i.e.}
\begin{eqnarray}
g_-(z) \equiv B(z) \qquad \mbox{\it ``scattering data".} 
\end{eqnarray}
Let us focus on case 2, and make the ansatz 
\begin{eqnarray}
\label{u}
w = \delta(x-t) + K(x,t) \Theta(x-t)
\end{eqnarray}
for the full solution to (\ref{ansa}), with $\Theta$ the Heaviside step function. Take $K$ to vanish for $x<-R$. Plugging this ansatz back into (\ref{ansa}) and collecting the terms proportional to $\delta(x-t)$, one gets
\begin{eqnarray}
V(x) = 2 \frac{d}{dx}K(x,x).
\end{eqnarray}
One then notices that, if $w(x,t)$ solves (\ref{ansa}), so does $w(x,-t+s)$ for an arbitrary constant shift $s$. Therefore, also
\begin{eqnarray}
w(x,t) + \int_{-\infty}^\infty ds \, B(s) \, w(x,-t+s)
\end{eqnarray} 
does. If we use the ansatz (\ref{u}), this means that
\begin{eqnarray}
\delta(x-t) + K(x,t)\Theta(x-t)+B(x+t)+\int_{-\infty}^x ds \, B(s+t) \, K(x,s)
\end{eqnarray} 
solves (\ref{ansa}), and it coincides with $\delta(x-t) + B(x+t)$ when $x<-R$. Therefore, this solution corresponds to case 1 above, hence it must vanish for $x>t$. This in turn implies
\begin{eqnarray}
K(x,t)+B(x+t)+\int_{-\infty}^x ds \, B(s+t) \, K(x,s)=0.
\end{eqnarray}
The form we use in the case of the KdV equation basically involves the Fourier transform of the procedure we have just sketched.

As an example, the single-soliton solution of the KdV equation is obtained from the above procedure in the case when $b=0$, and there is only one discrete eigenvalue $u$. 
In this situation, one simply has 
\begin{eqnarray}
B(x) = \gamma \, e^{k x}, \qquad \gamma \equiv c^2 \, e^{- 8 k^3 t}
\end{eqnarray}
and it is therefore convenient to make an ansatz for $K$ of the form
\begin{eqnarray}
K(x,y) = K(x) \, e^{k y}.
\end{eqnarray}
Eq. (\ref{GLM}) then becomes easy to solve after a simple integration for $k>0$: 
\begin{eqnarray}
K(x)=-\frac{2 \gamma \, k \, e^{k x}}{\gamma e^{2 k x} + 2 k}.
\end{eqnarray}
We immediately get from (\ref{GLM1}) that
\begin{eqnarray}
\phi = - \frac{16 \, c^2 \, k^3 \, e^{2 k \big(- 4 k^2 t + x\big)}}{\big[c^2 \, e^{2 k \big(- 4 k^2 t + x\big)} + 2 k\big]^2}.
\end{eqnarray}
For the choice $c^2 = 2 k$, we obtain
\begin{eqnarray}
\phi \, = \, - 2 \, k^2 \, \mbox{sech}^2 \Big[k \big(-4 k^2 t + x\big)\Big],
\end{eqnarray}
which coincides with (\ref{upsilon}) for $\upsilon = 4 k^2$.
For further detail and a complete exposition, we recommend the lecture notes \cite{Dunajski}.

\subsection{The Sine-Gordon equation and Jost solutions}

We will now apply the inverse scattering method to the more complicated example of the Sine-Gordon theory (\ref{SG}). In this section, we will follow \cite{BabelonBernardTalon}.

Let us look for solutions of (\ref{SG}) behaving at infinity like
\begin{eqnarray}
\label{bc}
\phi \to 0 \qquad \mbox{as} \, \, x \to - \infty, \qquad \qquad \phi \to \frac{q \pi}{\beta} \qquad \mbox{as} \, \, x \to + \infty.
\end{eqnarray} 
The quantity $q$ is called the {\it topological charge}. We will restrict to the case where $q$ is an integer, which is compatible with the equation. It can be calculated as
\begin{eqnarray}
q = \frac{\beta}{\pi} \int_{-\infty}^\infty dx \, \frac{\partial \phi}{\partial x} = \frac{\beta}{\pi} \Big[ \phi(\infty) - \phi(-\infty)\Big].
\end{eqnarray} 
If $q=1$ we speak of a soliton (or {\it kink}), while if $q=-1$ we speak of an anti-soliton (or {\it anti-kink}).

\bigskip

\noindent {\it Direct Problem}

\bigskip 

\noindent We begin by focusing on what is called the {\it direct scattering problem}, namely, 
the first equation in (\ref{aux}):
\begin{eqnarray}
\label{direct}
\partial_x \Psi \, = \, L \, \Psi, \qquad \Psi = \begin{pmatrix}\psi_1\\ \psi_2\end{pmatrix}. 
\end{eqnarray}
At infinity, the conditions (\ref{bc}) imply for (\ref{direct})
\begin{eqnarray}
\partial_x \Psi \, \to \, i k \, \sigma_1 \Psi \qquad \mbox{as} \, \, x \to - \infty, \qquad \qquad \partial_x \Psi \, \to \, i k e^{i q \pi} \, \sigma_1 \Psi \qquad \mbox{as} \, \, x \to + \infty, 
\end{eqnarray} 
where $\sigma_1$ is a Pauli matrix, and
\begin{eqnarray}
k = m u \, - \, \frac{m}{u}.
\end{eqnarray}
This means that the solution will behave as plane waves at infinity. Let us define 
\begin{eqnarray}
\bar{\Psi} = \begin{pmatrix}\bar{\psi}_1\\\bar{\psi}_2\end{pmatrix} = \begin{pmatrix}- i \psi_2^*\\i \psi_1^*\end{pmatrix}.
\end{eqnarray}
We will make the specific choice of two {\it Jost solutions}, determined by the following asymptotic behaviours;
\begin{equation}
\Psi_1 \, \to \, a \, \begin{pmatrix}1\\1\end{pmatrix} e^{i k x} - b \begin{pmatrix}1\\-1\end{pmatrix} e^{-i k x}\qquad x \to - \infty, \qquad \qquad \Psi_1 \, \to \, \begin{pmatrix}1\\e^{i q \pi}\end{pmatrix} e^{i k x} \qquad x \to + \infty, \nonumber
\end{equation}
\begin{equation}
\Psi_2 \to \begin{pmatrix}1\\-1\end{pmatrix} e^{- i k x} \qquad x \to - \infty, \qquad \Psi_2 \to -b^* \begin{pmatrix}1\\e^{i q \pi}\end{pmatrix} e^{i k x} + a \begin{pmatrix}e^{i q \pi}\\-1\end{pmatrix} e^{-i k x}\qquad x \to + \infty, \nonumber
\end{equation} 
\begin{eqnarray}
|a(u)|^2 \, + \, |b(u)|^2 \, = \, 1.
\end{eqnarray}
The last condition follows from considering that various Wronskians constructed with $\Psi_i$ and $\bar{\Psi}_i$ are independent of $x$. 

The two solutions $\Psi_1$ and $\Psi_2$ are independent as long as the Wronskian $\det |\Psi_1 \, \Psi_2|$ does not vanish, namely, as long as the {\it Jost function} $a(u)$ does not have zeroes. Let us first put ourselves away from any zero of $a(u)$, and list a few properties of the two independent Jost solutions (without proof).

\begin{itemize}

\item The Jost solutions $\Psi_1$ and $\Psi_2$, when regarded as functions of $u$, are analytic in the upper half plane $\mbox{Im} (u) > 0$. 

\item For a fixed $x$, the Jost solutions have the following asymptotics in $u$:
\begin{equation}
\Psi_1 \, \to \, e^{- i \frac{q \pi}{2}} \begin{pmatrix}e^{i \frac{\phi \beta}{2}}\\e^{-i \frac{\phi \beta}{2}}\end{pmatrix} e^{i k x} \qquad |u| \to \infty, \qquad \Psi_1 \, \to \, e^{i \frac{q \pi}{2}} \begin{pmatrix}e^{-i \frac{\phi \beta}{2}}\\e^{i \frac{\phi \beta}{2}}\end{pmatrix} e^{i k x} \qquad |u| \to 0, \nonumber
\end{equation}
\begin{equation}
\Psi_2 \, \to \, \begin{pmatrix}e^{i \frac{\phi \beta}{2}}\\-e^{-i \frac{\phi \beta}{2}}\end{pmatrix} e^{-i k x} \qquad |u| \to \infty, \qquad \Psi_2 \, \to \, \begin{pmatrix}e^{-i \frac{\phi \beta}{2}}\\-e^{i \frac{\phi \beta}{2}}\end{pmatrix} e^{-i k x} \qquad |u| \to 0. \nonumber
\end{equation} 

\item The Jost function $a(u)$ is analytic in the upper half plane $\mbox{Im} (u) > 0$ and satisfies
\begin{eqnarray}
&&a(u) \, \to \, e^{- i \frac{q \pi}{2}} \qquad |u| \to \infty, \qquad a(u) \, \to \, e^{i \frac{q \pi}{2}} \qquad \qquad |u| \to 0, \nonumber\\
&& a(- u) \, = \, e^{- i q \pi} \, a^*(u) \qquad \mbox{for real} \, \, u.
\end{eqnarray}

\end{itemize}  

The two functions $a(u)$ and $b(u)$, the zeroes $u_n$ of $a(u)$ and the proportionality constants $c_n$ between $\Psi_1$ and $\Psi_2$ at the values $u_n$ where the Jost solutions become dependent 
$$\Psi_2 (x,u_n) = c_n \, \Psi_1 (x,u_n),$$ 
altogether form the set of {\it scattering data}.

The time evolution of the scattering data can now be explicitly computed by substitution into the {\it second} equation in (\ref{aux}). One gets
\begin{eqnarray}
\label{data}
&&a(u,t) = a(u,0), \qquad b(u,t) = e^{2 i m \big(u + \frac{1}{u}\big) t} \, b(u,0), \qquad \mbox{hence}\nonumber\\
&& u_n (t) = u_n(0), \qquad c_n(t) = e^{-2 i m \big(u_n + \frac{1}{u_n}\big) t} \, c_n(0).
\end{eqnarray}  

\bigskip

\noindent {\it Inverse Problem}

\bigskip 

\noindent We now proceed to reconstruct the field $\phi$ from the solutions to the auxiliary linear problem. For this, we start by noticing that for real $u$ the Jost solutions satisfy 
\begin{eqnarray}
\frac{1}{a(u)} \, e^{i\frac{q \pi}{2} \sigma_3} \, \widehat{\Psi}_2 \, = \, \xi (u) \, \widehat{\Psi}_1 \, + i \, \bar{\widehat{\Psi}}_1,
\end{eqnarray}
where
\begin{eqnarray}
\xi (u) = - \frac{b^*(u)}{a^*(u)}, \qquad \widehat{\Psi}_1 = e^{i\frac{q \pi - \beta \phi}{2} \sigma_3} \Psi_1, \qquad \widehat{\Psi}_2 = e^{- i\frac{\beta \phi}{2} \sigma_3} \Psi_1.
\end{eqnarray}
This can be obtained by using that, if $\psi(x,u)$ is a solution to the direct problem (\ref{direct}), so is $\sigma_3 \, \psi(x,-u)$. One can then compare the asymptotic behaviours of $\bar{\Psi}_i(x,u)$ and $\sigma_3 \Psi_i(x,-u)$, where $\Psi_i$ are the two Jost solutions.

One can also prove that the (hatted) Jost solutions admit the following Fourier representation:
\begin{eqnarray}
\label{Four}
&&\widehat{\Psi}_1 = e^{i k x} \begin{pmatrix}1\\e^{i q \pi}\end{pmatrix} + \int_x^\infty dy \, \bigg(v_1 (x,y) + \frac{1}{u} w_1(x,y)\bigg) \, e^{i k y}, \nonumber\\
&&\widehat{\Psi}_2 = e^{-i k x} \begin{pmatrix}1\\-1\end{pmatrix} + \int_{-\infty}^x dy \, \bigg(v_2 (x,y) + \frac{1}{u} w_2(x,y)\bigg) \, e^{-i k y},
\end{eqnarray}
for $v_i(x,y)$ and $w_i(x,y)$ two-component vectors with sufficiently regular behaviour.

The importance of the kernels $v_i$ and $w_i$ is that the knowledge of their explicit form allows to reconstruct the matrix $L$ and therefore provides a solution for the original field $\phi$. One in fact has that, plugging these expansions back into the direct problem, 
\begin{eqnarray}
\label{what}
e^{2 i \beta \phi(x)} \, = \, \frac{i m + e^{i q \pi} w_{1,2}(x,x)}{i m + w_{1,1}(x,x)},
\end{eqnarray}
where $q$ is defined in (\ref{bc}) and $w_{i,j}$ is the component $j$ of the vector $w_i$. 

The final step of the process involves again the Gel'fand-Levitan-Marchenko equation for the kernels $v_i$ and $w_i$ appearing in the Fourier decomposition (\ref{Four}).

\newtheorem*{GLM}{Theorem (Gel'fand-Levitan-Marchenko)}

\begin{GLM}
\label{GLME}
The kernels $v_1,w_1$ in (\ref{Four}) satisfy the following integral equations:
\begin{eqnarray}
&&\bar{v}_1(x,y) \, = \, f_0(x+y) \, \begin{pmatrix}1\\e^{i q \pi}\end{pmatrix}+ \int_x^\infty dz \, \bigg(f_0(z+y) \, v_1 (x,z) + f_{-1}(z+y) w_1(x,z)\bigg), \nonumber\\
&&\bar{w}_1(x,y) \, = \, f_{-1}(x+y) \, \begin{pmatrix}1\\e^{i q \pi}\end{pmatrix} + \int_x^\infty dz \, \bigg(f_{-1}(z+y) \, v_1 (x,z) + f_{-2}(z+y) w_1(x,z)\bigg),\nonumber
\end{eqnarray}
with the scattering data (\ref{data}) entering these equations as
\begin{eqnarray}
f_i (x) = -\frac{m}{2 \pi i} \, \int_{-\infty}^{\infty} d u \, u^i \, e^{i k x} \, \xi(u) + m \sum_n e^{i k(u_n) x} u_n^i m_n, \qquad m_n = \frac{c_n}{a'(u_n)}.
\end{eqnarray}

\end{GLM}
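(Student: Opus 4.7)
The plan is to derive the two integral equations by exploiting the jump relation that connects the two Jost solutions, combined with the Fourier representations (\ref{Four}). First, I would begin from the identity
\begin{eqnarray}
\frac{1}{a(u)}\, e^{i\frac{q\pi}{2}\sigma_3}\, \widehat{\Psi}_2(x,u) \, = \, \xi(u)\, \widehat{\Psi}_1(x,u) \, + \, i\, \bar{\widehat{\Psi}}_1(x,u),
\end{eqnarray}
established earlier for real $u$, and insert on both sides the Fourier representations (\ref{Four}). I would then multiply through by $\frac{1}{2\pi i}\, e^{-iky}$ with $y>x$ and integrate over the real $u$-axis, viewing $k=m(u-1/u)$ as the natural momentum.

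Second, the right-hand side of the resulting relation produces, via the plane-wave pieces of $\widehat{\Psi}_1$ and $\bar{\widehat{\Psi}}_1$, delta-function kernels that reconstruct $\bar v_1(x,y)$ and $\bar w_1(x,y)$ pointwise, together with convolutions of $v_1,w_1$ against the Fourier transform of $\xi(u)$ (the continuous-spectrum part of the $f_i$). The left-hand side, containing $\widehat{\Psi}_2/a(u)$, is analytic in the upper half $u$-plane except at the simple zeros $u_n$ of $a(u)$; closing the contour in the upper half-plane and using the asymptotics of $\widehat{\Psi}_2$ at $|u|\to\infty$ and $|u|\to 0$ to discard the arcs, Cauchy's theorem collects residues producing exactly the discrete sum $m\sum_n e^{ik(u_n)x}u_n^i m_n$ with $m_n = c_n/a'(u_n)$. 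Combined with the continuous part, this assembles the scattering-data kernels $f_0, f_{-1}, f_{-2}$.

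Third, the Fourier ansatz (\ref{Four}) contains precisely two powers of $u$, namely $u^0$ and $u^{-1}$. Matching coefficients of each of these separately on both sides of the integrated identity yields two coupled integral equations which close on $v_1$ and $w_1$ alone (the two-component vector $(1,e^{iq\pi})^{T}$ appears as an inhomogeneous source from the plane-wave term in $\widehat{\Psi}_1$). The powers $u^0$ and $u^{-1}$ combine with the residue factor $u_n^i$ and the $\xi(u)u^i$ weights to deliver $f_0, f_{-1}, f_{-2}$ in the pattern displayed in the statement.

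The main technical obstacle I expect is the careful justification of the contour deformation and the separation into continuous and discrete contributions. One needs uniform control of $\widehat{\Psi}_2(x,u)/a(u)$ on large and small semicircles in the upper half $u$-plane, using the asymptotic expansions listed before the theorem, to ensure the arcs contribute nothing. One must also verify that the $\sigma_3$-twists and the $e^{i q\pi/2}$ phases combine consistently with the definition of $\bar{\widehat{\Psi}}_1$ so that the inhomogeneous term is precisely $f_i(x+y)(1,e^{iq\pi})^T$, and that the support conditions $y>x$ together with the integration domain $[x,\infty)$ of the Fourier kernels in (\ref{Four}) make the convolution integrals well-defined. Once these analytic points are settled, the remaining work is purely algebraic bookkeeping of the powers of $u$.
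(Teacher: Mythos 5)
The paper does not actually prove this theorem: it is stated without derivation, with the whole section deferring to Babelon--Bernard--Talon \cite{BabelonBernardTalon}. Your outline is precisely the standard derivation given there: insert the Fourier representations (\ref{Four}) into the jump relation $\frac{1}{a(u)}e^{iq\pi\sigma_3/2}\widehat{\Psi}_2 = \xi(u)\widehat{\Psi}_1 + i\bar{\widehat{\Psi}}_1$, project against a plane wave in $y>x$, close the contour for $\widehat{\Psi}_2/a(u)$ in the upper half plane to pick up the residues $m_n = c_n/a'(u_n)$ at the zeros of $a(u)$, and read off the two equations from the $u^0$ and $u^{-1}$ components. Two points deserve more care than your sketch gives them: the projection should be against $e^{+iky}$ (not $e^{-iky}$) so that the $e^{-ikz}$ kernels of $\bar{\widehat{\Psi}}_1$ produce $\delta(z-y)$ while the $\xi(u)e^{ikz}$ terms produce $f_i(z+y)$; and the change of variables $u\mapsto k=m(u-1/u)$ is two-to-one on the real axis with Jacobian $m(1+u^{-2})$, which is exactly where the extra inverse powers of $u$ enter and why three kernels $f_0,f_{-1},f_{-2}$ appear rather than two. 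With those repairs the argument goes through and reproduces the stated equations.
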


The Gel'fand-Levitan-Marchenko equations greatly simplify under the assumption that $b=0$, corresponding to absence of reflection in the auxiliary linear problem. The integral equation for $v_1$ reduces to a linear system. The simplest possible assumption is that there is only one purely imaginary zero $u_1 = i \mu_1$ of $a(u)$, with $\mu_1<0$, which collapses the entire system to one linear equation with straightforward solution. Plugging this solution into (\ref{what}) returns the one-soliton solution
\begin{eqnarray}
e^{i \beta \phi} = \frac{1+X}{1-X}, \qquad X = i \, e^{- 2 m \big(\mu_1 + \frac{1}{\mu_1}\big) (x-x_0) + 2 m \big(\mu_1 - \frac{1}{\mu_1}\big)t}, \qquad \mu_1 \equiv - \sqrt{\frac{1+v}{1-v}}, 
\end{eqnarray}
which is a profile propagating with velocity $v=\frac{\mu_1^2 - 1}{\mu_1^2 + 1} \in [-1,1]$.  It is easy to see that the topological charge $q$ in (\ref{bc}) equals $1$ for this solution. Upon quantisation, solitons and anti-solitons become the elementary particles in quantum Sine-Gordon theory.
  
The other famous and slightly more complicated solution one obtains, is one that is normally quantised into a bound state of a soliton and an anti-soliton, and it accordingly has zero topological charge. This solution is called a {\it breather}. The profile for a (non-translating) breather is given by
\begin{eqnarray}
\phi = \frac{2}{\beta} \arctan \frac{\sqrt{1-\omega^2} \, \cos (4 m \omega t)}{\omega \, \cosh \big(4 m \sqrt{1-\omega^2} x\big)}, \qquad \omega \in [0,1].
\end{eqnarray}

\section{Conclusions}

We hope that these lectures have stimulated the curiosity of the many young researchers present at this school to delve into the problematics of integrable systems, and have prepared the ground for the following lectures, where, in particular, the quantum version of integrability will be presented. 

This field is constantly growing, attracting representatives of different communities with interests ranging from mathematics to mathematical physics and high-energy physics. We are sure that the new generation of physicists and mathematicians present here in Durham will accomplish great progress in all these avenues of investigation. 

\section{Acknowledgements}

It is a pleasure to thank the Department of Mathematical Sciences at
Durham University, the GATIS network (in particular Zoltan Bajnok, Patrick Dorey and Roberto Tateo, who where present in Durham, and Charlotte Kristjansen) and the organisers of the school (Alessandra Cagnazzo, Rouven Frassek, Alessandro Sfondrini, Istv\'an Sz\'ecs\'enyi and Stijn van Tongeren) for their support and fantastic work , and all the lecturers for close cooperation on the choice of material (Fedor Levkovich-Maslyuk), thorough proof-reading (Florian Loebbert) and for very valuable comments and extremely useful discussions (Diego Bombardelli, Stefano Negro, Stijn van Tongeren). A special thanks goes to A. Sfondrini for providing the figures displayed in this review.

The author also thanks Andrea Prinsloo and Paul Skerritt for a careful proof-reading of the manuscript and for very useful suggestions and comments. He thanks Maxim Zabzine and Pavel Etingof for the discussions that brought to delving into the subject of section 4. He thanks the EPSRC for funding under the First Grant project EP/K014412/1 ``Exotic quantum groups, Lie superalgebras and integrable systems", and the STFC for support under the Consolidated Grant project nr. ST/L000490/1 ``Fundamental Implications of Fields, Strings and Gravity". He acknowledges useful conversations with the participants of the ESF and STFC supported workshop ``Permutations and Gauge String duality" (STFC-4070083442, Queen Mary U. of London, July 2014). 

Finally, it is a pleasure to thank the students attending the school, for making it such an exciting event with their questions and suggestions, which were very much appreciated.

No data beyond those presented and cited in this work are needed to validate this study.

\end{document}